\newcommand{\RN}[1]{%
  \textup{\uppercase\expandafter{\romannumeral#1}}%
}
\newcommand{\R}{\mathbb{R}}
\newtheorem{theorem}{Theorem}
\newtheorem{proposition}[theorem]{Proposition}
\newtheorem{lemma}[theorem]{Lemma}
\theoremstyle{definition}
\begin{document} 

\title[The Maxwell Equations on Schwarzschild-de Sitter Spacetimes]{Decay of Solutions to the Maxwell Equations on Schwarzschild-de Sitter Spacetimes}

\author{Jordan Keller}
\address{Jordan Keller\\
Department of Mathematics\\
Columbia University, USA}
\email{keller@math.columbia.edu}

\thanks{The author would like to thank Pei-Ken Hung, Karsten Gimre, Mu-Tao Wang, and Shing-Tung Yau for for their interest in this work.  Especially, he thanks Pei-Ken Hung for many stimulating conversations.} 

\begin{abstract}
In this work, we consider solutions of the Maxwell equations on the Schwarzschild-de Sitter family of black hole spacetimes.  We prove that, in the static region bounded by black hole and cosmological horizons, solutions of the Maxwell equations decay to stationary Coulomb solutions at a super-polynomial rate, with decay measured according to ingoing and outgoing null coordinates.  Our method employs a differential transformation of Maxwell tensor components to obtain higher-order quantities satisfying a Fackerell-Ipser equation, in the style of Chandrasekhar \cite{Chandra1} and the more recent work of Pasqualotto \cite{Pasqualotto}.  The analysis of the Fackerell-Ipser equation is accomplished by means of the vector field method, with decay estimates for the higher-order quantities leading to decay estimates for components of the Maxwell tensor.

\end{abstract}
\maketitle

\section{Introduction}

The Schwarzschild-de Sitter family, parametrized by mass $M > 0$, consists of those spherically symmetric spacetimes solving the Einstein equations with positive cosmological constant $\Lambda$:
\begin{equation}\label{EinsteinCosmo}
Ric(g) = \Lambda g.
\end{equation}
Such spacetimes display a mixture of geometric features: far from the black hole, they resemble the de Sitter spacetime with cosmological constant $\Lambda$; close to the black hole, they take on the characteristics of the Schwarzschild family.

The stability of the Schwarzschild-de Sitter spacetimes as solutions of the Einstein equations \eqref{EinsteinCosmo} was resolved in the recent breakthrough of Hintz and Vasy \cite{HV1}, where the authors prove a more general result on stability of the small angular momenta Kerr-de Sitter spacetimes.  The authors' result is a culmination of a great deal of work on the analysis of hyperbolic equations on Kerr de-Sitter spacetimes within the framework of the Melrose $b$-calculus; see \cite{Vasy, WunschZworski, Dyatlov, HV3, Hintz, HV2}.  In particular, the aforementioned authors prove exponential decay for solutions of the Maxwell equations on Kerr de-Sitter spacetimes with small angular momenta in \cite{HV4}.

There is a comparative dearth of analysis utilizing the vector-field multiplier method, with notable results of Dafermos and Rodnianski on the scalar wave \cite{DRdS} and of Schlue on the cosmological region \cite{Schlue1, Schlue2}.  The present paper adds to this literature, providing boundedness and decay estimates for solutions of the Maxwell equations using red-shift and Morawetz multipliers, along with the static multiplier.  In addition, this work serves as a ``warm-up" exercise, towards a demonstration of the linear stability of the Schwarzschild-de Sitter family by means of vector-field methods.

\section{Schwarzschild-de Sitter Spacetimes}

Regarding the cosmological constant $\Lambda > 0$ as fixed, the Schwarzschild-de Sitter spacetimes comprise a one-parameter family of solutions $(\mathcal{M},g_{M,\Lambda})$ to the Einstein equations
\begin{equation}\label{EinsteinEqn}
Ric(g) = \Lambda g.
\end{equation}
The family is parametrized by mass $M$, which we assume to satisfy the sub-extremal condition
\begin{equation}\label{subextremal}
0 < M < \frac{1}{3\sqrt{\Lambda}}.
\end{equation}

These spacetimes have both black hole and cosmological regions, bounded by respective horizons $\mathcal{H}$ and $\overline{\mathcal{H}}$.  Our primary interest is the region between the two, wherein the spacetimes are static and spherically symmetric.  The staticity and spherical symmetry are encoded by the static Killing field, denoted $T$, and the angular Killing fields, denoted $\Omega_{i}$, with $i = 1,2,3$.  We collect the angular Killing fields in the set $\Omega := \{ \Omega_{i} | i = 1,2,3\}$.

For further details on the Schwarzschild-de Sitter family, we refer the reader to \cite{Carter, Hawking}.

\subsection{Coordinate Systems}

Our results concern the static region, up to and including the future event horizon and the future cosmological horizon.  In the course of our analysis, various coordinate systems will prove useful; we enumerate them below.

In the coordinates $(t,r,\theta,\phi)$, this region has geometry encoded by
\begin{equation}
g_{M,\Lambda} = -(1-\mu)dt^2 + (1-\mu)^{-1}dr^2 + \slashed{g}_{AB}dx^{A}dx^{B},
\end{equation}
with
\begin{align}
\mu &:= \frac{2M}{r}+\frac{1}{3}\Lambda r^2,\\
\slashed{g}_{AB}dx^{A}dx^{B} &:= r^2d\sigma_{S^2} = r^2\left(d\theta^2 + \sin^2\theta d\phi^2\right),
\end{align}
where $d\sigma_{S^2}$ denotes the round metric on the unit sphere.  Note that $\slashed{g}_{AB}dx^{A}dx^{B}$ is the induced metric on the sphere of symmetry $S^2(t,r)$.  As an additional piece of notation, we use $\slashed{\epsilon}_{AB}$ to denote the associated area form on the sphere of symmetry $S^2(t,r)$.  

This static chart is valid for radii $0 < r_{b}< r < r_{c}$, with $r_{b}$ and $r_{c}$ the black hole and cosmological radii appearing as roots of the equation $1 - \mu = 0$.  Note that the equation has a remaining negative root, which we denote by $r_{-}$.  Concretely, we have \cite{LakeRoeder}
\begin{align}
\begin{split}
r_{b} &= \frac{2}{\sqrt{\Lambda}}\cos(\xi/3),\\
r_{c} &= \frac{2}{\sqrt{\Lambda}}\cos(\xi/3 + 4\pi/3),\\
r_{-} &= \frac{2}{\sqrt{\Lambda}}\cos(\xi/3 + 2\pi/3),
\end{split}
\end{align}
where $\xi$ is specified by the relation
\begin{equation}
\cos \xi = -3M\sqrt{\Lambda}.
\end{equation}
In the sub-extremal regime \eqref{subextremal}, the radii $r_{b}$ and $r_{c}$ satisfy
\begin{equation}
0 < 2M < r_{b} < 3M < \frac{1}{\sqrt{\Lambda}} < r_{c} < \frac{3}{\sqrt{\Lambda}} < \infty.
\end{equation}

Letting
\[ \kappa_{b} := \frac{d}{dr}(1-\mu)\Big|_{r = r_{b}},\]
with similar definitions relating to $r_{c}$ and $r_{-}$,  we define the Regge-Wheeler coordinate $r_{*}$ by
\begin{equation}\label{rStar}
r_{*} := -\frac{1}{2\kappa_{c}}\log\left|\frac{r}{r_{c}}-1\right| + \frac{1}{2\kappa_{b}}\log\left|\frac{r}{r_{b}}-1\right| + \frac{1}{2\kappa_{-}}\log\left|\frac{r}{r_{-}}-1\right| + C,
\end{equation}
with $C$ an arbitrary constant.  For convenience in the subsequent analysis, we choose this normalization constant such that $r_{*} = 0$ on the photon sphere $r = 3M$.  In the Regge-Wheeler coordinates, the metric takes on the form
\begin{equation}
g_{M,\Lambda} = -(1-\mu)dt^2 + (1-\mu)dr_{*}^2 + \slashed{g}_{AB}dx^{A}dx^{B}.
\end{equation}

Using the Regge-Wheeler coordinates $(t,r_{*})$, we define the inward and outward null coordinates $(u,v)$ by
\begin{align}\label{uvDef}
\begin{split}
u &= \frac{1}{2}(t - r_{*}),\\
v &= \frac{1}{2}(t + r_{*}),
\end{split}
\end{align}
in which the metric has the form
\begin{equation}
g_{M,\Lambda} = -4(1-\mu)dudv + \slashed{g}_{AB}dx^{A}dx^{B}.
\end{equation}

The pair $(u,v)$, referred to as Eddington-Finkelstein coordinates, break down at either of the horizons.  However, there are well-known, though rather cumbersome, rescalings of $u$ and $v$ which extend regularly to each of the horizons; see \cite{Carter, Hawking}.

For a given pair of null coordinates $(\tilde{u},\tilde{v})$, we define the null hypersurfaces:
\begin{align}\label{nullHypersurfaces}
\begin{split}
\overline{C}_{\tilde{u},\tilde{v}}&:= \{(u,v,\theta,\phi), u = \tilde{u}, v \geq \tilde{v}\},\\
 \underline{C}_{\tilde{u},\tilde{v}} &:= \{ (u,v,\theta,\phi), u \geq \tilde{u}, v = \tilde{v}\},\\
C_{\tilde{u},\tilde{v}} &:= \overline{C}_{\tilde{u},\tilde{v}} \cup \underline{C}_{\tilde{u},\tilde{v}}.
\end{split}
\end{align}

Throughout this work, we use the following index notation: lowercase Latin characters $a,b = 0,1,2,3$ for spacetime indices, and uppercase Latin characters $A,B = 2,3$ for spherical indices.

\subsection{Trapped Null Geodesics}
In this subsection, we recount the well-known phenomenon of null geodesic trapping at the photon sphere.  Such trapping manifests as a ``loss of derivatives" in the integrated decay estimates appearing later in this work, as first described by Ralston \cite{Ralston}.  

Generally, given a Killing field $K^{a}$ and a geodesic $\gamma$ on a pseudo-Riemannian manifold with metric $g_{ab}$, application of the Killing field yields a constant of motion
\[C = g_{ab}K^{a}\dot{\gamma}^{b}\]
along the geodesic.

Specializing to the Schwarzschild-de Sitter setting, we have constants of motion
\begin{align*}
e &= g_{ab}T^{a}\dot{\gamma}^{b},\\
l_{i} &= g_{ab}\Omega_{i}^{a}\dot{\gamma}^{b}.
\end{align*}

Written with respect to the static chart, we have
\begin{align*}
T &= \partial_{t},\\
\Omega_1 &= \partial_{\phi},\\
\Omega_2 &= -\sin\phi\partial_{\theta} -\cot\theta\cos\phi\partial_{\phi},\\
\Omega_3 &= \cos\phi\partial_{\theta} -\cot\theta\sin\phi\partial_{\phi},
\end{align*}
along with the constants of motion $e, l_1,$ and the composite $q = l_2^2 + l_3^2$
\begin{align*}
e &= -(1-\mu)\dot{\gamma}^{t},\\
l_1 &= r^2\sin^2\theta \dot{\gamma}^{\phi},\\
q &= (r^2\dot{\gamma}^{\theta})^2 + \cot^2\theta(r^2\sin^2\theta \dot{\gamma}^{\phi})^2.
\end{align*}

Substituting these three constants, the null geodesic condition
\[0 = g_{ab}\dot{\gamma}^{a}\dot{\gamma}^{b} = -(1-\mu)(\dot{\gamma}^{t})^2 + (1-\mu)^{-1}(\dot{\gamma}^{r})^2 + r^2(\dot{\gamma}^{\theta})^2 + r^2\sin^2\theta(\dot{\gamma}^{\phi})^2\]
gives rise to a simple radial equation
\[r^4(\dot{\gamma}^{r})^2 = r^4e^2 - r^2(1-\mu)(q + l_1^2) =: \mathcal{R}(r,e,q,l_1).\]
Stationary solutions (i.e. trapped null geodesics) are the solution set of the equations
\begin{align*}
\mathcal{R} &= r^4e^2 -r^2(1-\mu)(q+l_1^2) = 0,\\
\partial_{r}\mathcal{R} &= 4r^3e^2 -\left(2r(1-\mu)-r^2\mu_{r}\right)(q+l_1^2) = 0.
\end{align*}
Combining the two, we obtain a linear equation
\[\left[2r(1-\mu) + r^2\mu_{r}\right] (q + l_1^2) = \left[2r-6M\right](q + l_1^2)=0,\]
vanishing at $r_{trapped} = 3M$.  Hence trapping occurs at the photon sphere $r = 3M$, regardless of our choice of cosmological constant $\Lambda$.

\subsection{Sphere Bundles}

Throughout this work, we consider quantities which are scalars and co-vectors on the spheres of symmetry.  The associated sphere bundles, respectively referred to as $\mathcal{L}(0)$ and $\mathcal{L}(-1),$ come equipped with projected covariant derivative operators $\slashed{\nabla}$, defined for scalars by ordinary differentiation and for co-vectors by
\begin{equation}\label{projCov}
\slashed{\nabla}_a dx^A =-\Gamma^{A}_{aB} dx^B.
\end{equation}

Given a spherical co-vector $\omega$, i.e. a section of $\mathcal{L}(-1)$, we define divergence and curl operators by
\begin{align}
\begin{split}
\slashed{div}\ \omega := \slashed{g}^{AB}\slashed{\nabla}_{A}\omega_{B},\\
\slashed{curl}\ \omega := \slashed{\epsilon}^{AB}\slashed{\nabla}_{A}\omega_{B},
\end{split}
\end{align}
and the tensorial spherical Laplacian by
\begin{equation}
\slashed{\Delta}_{\mathcal{L}(-1)} := \slashed{\nabla}^{A}\slashed{\nabla}_{A},
\end{equation}
extending the scalar spherical Laplacian.

In addition to the spherical operators above, we shall make use of spacetime d'Alembertian operators, defined by\begin{equation}
\slashed{\Box}_{\mathcal{L}(-s)} := \slashed{\nabla}^a\slashed{\nabla}_a,
\end{equation}
with $s = 0, 1$ and the appropriate covariant derivative operator.  Note that $\slashed{\Box}_{\mathcal{L}(0)} = \Box$ is the standard d'Alembertian operator on the Schwarzschild-de Sitter spacetime.

\section{The Maxwell Equations on Schwarzschild-de Sitter}

An alternating two-form $F \in \Lambda^2(\mathcal{M})$ is a solution of the Maxwell equations on $(\mathcal{M},g)$ if $F$ satisfies
\begin{equation}\label{MaxwellEqns}
\nabla^{a}F_{ab} = 0, \hspace{5mm} \nabla_{[a}F_{bc]} = 0,
\end{equation}
or equivalently,
\begin{equation}\label{MaxwellEqns2}
dF = 0, \hspace{5mm} d \star F = 0.
\end{equation}
We refer to such solutions as Maxwell tensors on $(\mathcal{M},g)$.

The primary purpose of this section is to study the structure of the Maxwell equations, expressed in a double null frame.

\subsection{Null Decomposition of the Maxwell Equations}
Using the null coordinates \eqref{uvDef}, we define null directions
\begin{align}
\begin{split}
L &:= \partial_{v} = \partial_{t} + \partial_{r_{*}},\\
\underline{L} &:= \partial_{u} = \partial_{t} - \partial_{r_{*}},
\end{split}
\end{align}
spanning the normal bundle of the spheres of symmetry $S^2(t,r) = S^2(u,v)$.  We complete our null frame by choosing orthonormal basis vectors $e_{A}, A = 1,2$, for each of the spheres of symmetry.  Note that the pair $(L,\underline{L})$ is irregular at either of the horizons; rescaling, we define the pairs
\begin{align}\label{eventHorizonPair}
\begin{split}
e_3 &:= L,\\
e_4 &:= (1-\mu)^{-1}\underline{L},
\end{split}
\end{align}
regular across $\mathcal{H}^{+}$, and
\begin{align}\label{cosmoHorizonPair}
\begin{split}
\bar{e}_3 &:= (1-\mu)^{-1}L,\\
\bar{e}_4 &:= \underline{L},
\end{split}
\end{align}
regular across $\overline{\mathcal{H}}^{+}.$

With the null frame $\{ L, \underline{L}, e_1, e_2 \}$ in hand, we decompose the Maxwell tensor into the components
\begin{align}\label{nullMaxwellTensor}
\begin{split}
\alpha_{A} &:= F(e_{A},L),\\
\underline{\alpha}_{A} &:= F(e_{A},\underline{L}),\\
\rho &:= \frac{1}{2}(1-\mu)^{-1}F(L,\underline{L}),\\
\sigma &:= \frac{1}{2}\slashed{\epsilon}^{CD}F_{CD},
\end{split}
\end{align}
with $\alpha_{A}$ and $\underline{\alpha}_{A}$ regarded as one-forms on the spheres of symmetry, i.e. sections of $\mathcal{L}(-1)$, and $\rho$ and $\sigma$ regarded as functions on the same.  

\begin{proposition}\label{nullMaxwell}
Expressed in terms of the null decomposition above, the Maxwell equations \eqref{MaxwellEqns} take the form

\begin{align}
\label{MaxwellOne}\frac{1}{r}\slashed{\nabla}_{L}(r\underline{\alpha}_{A}) + (1-\mu)\left(\slashed{\nabla}_{A}\rho - \slashed{\epsilon}_{AB}\slashed{\nabla}^{B}\sigma\right) &= 0,\\
\label{MaxwellTwo}\frac{1}{r}\slashed{\nabla}_{\underline{L}}(r\alpha_{A}) - (1-\mu)\left(\slashed{\nabla}_{A}\rho + \slashed{\epsilon}_{AB}\slashed{\nabla}^{B}\sigma\right) &= 0,\\
\label{MaxwellThree}\slashed{curl}\ \underline{\alpha} - 2\frac{1-\mu}{r}\sigma + \slashed{\nabla}_{\underline{L}}\sigma &= 0,\\
\label{MaxwellFour}-\slashed{div}\ \underline{\alpha} +2\frac{1-\mu}{r}\rho - \slashed{\nabla}_{\underline{L}}\rho &= 0,\\
\label{MaxwellFive}\slashed{curl}\ \alpha + 2\frac{1-\mu}{r}\sigma + \slashed{\nabla}_{L}\sigma &= 0,\\
\label{MaxwellSix}\slashed{div}\ \alpha - 2\frac{1-\mu}{r}\rho - \slashed{\nabla}_{L}\rho &= 0.
\end{align}

\end{proposition}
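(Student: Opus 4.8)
The plan is to substitute the null decomposition \eqref{nullMaxwellTensor} into the two forms of the Maxwell equations \eqref{MaxwellEqns} and extract the components with respect to the double null frame $\{L,\underline{L},e_1,e_2\}$. The only preliminary needed is a catalogue of the connection coefficients of this frame, which are very simple because the spacetime is static and spherically symmetric: from $g = -4(1-\mu)du\,dv + \slashed{g}$ one reads off $Lr = 1-\mu$, $\underline{L}r = -(1-\mu)$, the umbilicity relations $\slashed{\nabla}_{e_A}L = \tfrac{1-\mu}{r}e_A$ and $\slashed{\nabla}_{e_A}\underline{L} = -\tfrac{1-\mu}{r}e_A$, the vanishing $\nabla_L\underline{L} = \nabla_{\underline{L}}L = 0$, the relations $\nabla_L L = (\partial_v\log(1-\mu))L$ and $\nabla_{\underline{L}}\underline{L} = (\partial_u\log(1-\mu))\underline{L}$, and $\nabla_{e_A}e_B = \slashed{\nabla}_{e_A}e_B - \tfrac{1}{2r}\slashed{g}_{AB}L + \tfrac{1}{2r}\slashed{g}_{AB}\underline{L}$; the frame rotations $g(\nabla_{e_c}e_A,e_B)$ are antisymmetric in $A,B$ and are absorbed into the projected derivative $\slashed{\nabla}$ of \eqref{projCov}. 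One also records that the restriction of $F$ to a sphere of symmetry is $F_{AB} = \sigma\slashed{\epsilon}_{AB}$, since an antisymmetric two-tensor on a surface is a multiple of the area form. These coefficients can either be computed directly or quoted from the Schwarzschild literature, noting that only the function $\mu$ changes.

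With these in hand I would treat the two equations in \eqref{MaxwellEqns} in turn. For the Bianchi identity $\nabla_{[a}F_{bc]} = 0$, equivalently $\nabla_a F_{bc} + \nabla_b F_{ca} + \nabla_c F_{ab} = 0$, the independent frame triples are $(L,\underline{L},e_A)$, $(L,e_A,e_B)$ and $(\underline{L},e_A,e_B)$. The triples $(\underline{L},e_A,e_B)$ and $(L,e_A,e_B)$, each contracted with $\tfrac{1}{2}\slashed{\epsilon}^{AB}$ and simplified using $F_{AB} = \sigma\slashed{\epsilon}_{AB}$, yield \eqref{MaxwellThree} and \eqref{MaxwellFive}; the algebraic $2\tfrac{1-\mu}{r}\sigma$ terms come entirely from the umbilicity coefficients $\slashed{\nabla}_{e_A}\underline{L}$ and $\slashed{\nabla}_{e_A}L$. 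The remaining triple $(L,\underline{L},e_A)$ produces a one-form relation among $\slashed{\nabla}_L\underline{\alpha}_A$, $\slashed{\nabla}_{\underline{L}}\alpha_A$, $\underline{\alpha}_A$, $\alpha_A$ and $\slashed{\nabla}_A\rho$; regrouping via $\slashed{\nabla}_L\underline{\alpha}_A + \tfrac{Lr}{r}\underline{\alpha}_A = \tfrac{1}{r}\slashed{\nabla}_L(r\underline{\alpha}_A)$ and $\slashed{\nabla}_{\underline{L}}\alpha_A + \tfrac{\underline{L}r}{r}\alpha_A = \tfrac{1}{r}\slashed{\nabla}_{\underline{L}}(r\alpha_A)$ casts it as an identity relating $\tfrac{1}{r}\slashed{\nabla}_L(r\underline{\alpha}_A)$, $\tfrac{1}{r}\slashed{\nabla}_{\underline{L}}(r\alpha_A)$ and $(1-\mu)\slashed{\nabla}_A\rho$.

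For the divergence equation I would use the frame expression $\nabla^a F_{ab} = -\tfrac{1}{2(1-\mu)}\big[(\nabla_{\underline{L}}F)(L,E_b) + (\nabla_L F)(\underline{L},E_b)\big] + \sum_{A}(\nabla_{e_A}F)(e_A,E_b)$, which exploits $g^{L\underline{L}} = -\tfrac{1}{2(1-\mu)}$, $g^{LL} = g^{\underline{L}\underline{L}} = 0$. Taking $E_b = L$ and $E_b = \underline{L}$ gives \eqref{MaxwellSix} and \eqref{MaxwellFour}, the algebraic $\tfrac{1-\mu}{r}\rho$ terms being produced by the connection coefficients $\nabla_{e_A}e_B$ together with the $r$-dependence of $F(L,\underline{L})$. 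Taking $E_b = e_A$ gives a second one-form relation, which after the same regrouping relates $\tfrac{1}{r}\slashed{\nabla}_L(r\underline{\alpha}_A) + \tfrac{1}{r}\slashed{\nabla}_{\underline{L}}(r\alpha_A)$ to $(1-\mu)\slashed{\epsilon}_{AB}\slashed{\nabla}^B\sigma$. Adding and subtracting this relation and the $(L,\underline{L},e_A)$ relation from the previous paragraph then isolates $\tfrac{1}{r}\slashed{\nabla}_L(r\underline{\alpha}_A)$ and $\tfrac{1}{r}\slashed{\nabla}_{\underline{L}}(r\alpha_A)$, delivering \eqref{MaxwellOne} and \eqref{MaxwellTwo}. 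As a shortcut one can instead observe that $\nabla_{[a}F_{bc]} = 0$ and $\nabla^a F_{ab} = 0$ are interchanged under $F \mapsto \star F$, which acts on the null components by interchanging $\rho$ and $\sigma$ up to sign and sending $\alpha,\underline{\alpha}$ to their Hodge duals on the spheres of symmetry, so only half of the six equations need be derived by hand.

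I expect the main obstacle to be organizational rather than conceptual or analytic: keeping the signs and the projections onto $\mathcal{L}(-1)$ straight, in particular correctly distributing the first- and zeroth-order contributions among $\slashed{\nabla}_A\rho$, $\slashed{\epsilon}_{AB}\slashed{\nabla}^B\sigma$ and the $\tfrac{1-\mu}{r}$-weighted algebraic terms, and recognizing that the two one-form equations \eqref{MaxwellOne}--\eqref{MaxwellTwo} are not single frame components of either Maxwell equation but linear combinations of a Bianchi component and a divergence component.
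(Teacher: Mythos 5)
The paper does not actually prove Proposition \ref{nullMaxwell}; it defers entirely to Pasqualotto. Your plan is precisely the direct frame computation that the cited reference carries out, so in substance you are supplying the omitted proof rather than taking a different route, and the details you commit to are correct: $Lr=1-\mu$, $\underline{L}r=-(1-\mu)$, $\nabla_{e_A}L=\tfrac{1-\mu}{r}e_A$, $\nabla_{e_A}\underline{L}=-\tfrac{1-\mu}{r}e_A$, $\nabla_L\underline{L}=\nabla_{\underline{L}}L=0$, the normal part of $\nabla_{e_A}e_B$, the identity $F_{AB}=\sigma\slashed{\epsilon}_{AB}$, and the complete list of four Bianchi triples. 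The one genuinely structural point is also right and checks out explicitly: the triples $(\underline{L},e_A,e_B)$ and $(L,e_A,e_B)$ give \eqref{MaxwellThree} and \eqref{MaxwellFive}, the $L$- and $\underline{L}$-contractions of $\nabla^aF_{ab}=0$ give \eqref{MaxwellSix} and \eqref{MaxwellFour}, while \eqref{MaxwellOne} and \eqref{MaxwellTwo} arise only as half the sum and difference of the $(L,\underline{L},e_A)$ Bianchi component (which carries $\slashed{\nabla}_A\rho$) and the angular component of the divergence equation (which carries $\slashed{\epsilon}_{AB}\slashed{\nabla}^B\sigma$); your regrouping $\slashed{\nabla}_L\underline{\alpha}_A+\tfrac{Lr}{r}\underline{\alpha}_A=\tfrac1r\slashed{\nabla}_L(r\underline{\alpha}_A)$ is exactly what produces the $r$-weighted form. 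The duality shortcut $F\mapsto\star F$ is legitimate and halves the work.

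One caveat, which lies in the paper's conventions rather than in your plan: carrying out your computation with the literal definitions $\alpha_A=F(e_A,L)$, $\underline{\alpha}_A=F(e_A,\underline{L})$, $\rho=\tfrac12(1-\mu)^{-1}F(L,\underline{L})$ reproduces \eqref{MaxwellThree}--\eqref{MaxwellSix} and the combination \eqref{MaxwellOne}$+$\eqref{MaxwellTwo} exactly, but the $\rho$-terms in \eqref{MaxwellOne}, \eqref{MaxwellTwo}, \eqref{MaxwellFour}, \eqref{MaxwellSix} come out with the opposite sign; the system as printed corresponds to the convention $\rho=\tfrac12(1-\mu)^{-1}F(\underline{L},L)$ (which is also the convention consistent with the stated Coulomb values $\rho=Er^{-2}$). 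So when you set about ``keeping the signs straight,'' expect to reconcile a sign convention for $\rho$ rather than to find an error in your own derivation; nothing downstream in the paper is affected by this choice.
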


For a thorough derivation of the equations above, we refer the reader to Pasqualotto \cite{Pasqualotto}.

\subsection{Coulomb Solutions}
The Maxwell equations \eqref{MaxwellEqns} possess well-known stationary solutions, referred to as Coulomb solutions.  Concretely, given real constants $B$ and $E$, two-tensors of the form
\begin{equation}
F_{ab}dx^{a}\wedge dx^{b} = Br^{-2}\slashed{\epsilon}_{AB}dx^{A}\wedge dx^{B} + Er^{-2}(1-\mu)dt\wedge dr_{*}
\end{equation}
form a two-parameter family of stationary solutions, referred to as Coulomb solutions, to \eqref{MaxwellEqns}.  In terms of the null decomposition above, Coulomb solutions take the form
\begin{align}\label{CoulombNull}
\begin{split}
&\alpha_{A} = \underline{\alpha}_{A} = 0,\\
& \rho = Er^{-2},\\
& \sigma = Br^{-2}.
\end{split}
\end{align}

The main theorem of this work concerns decay of a general solution of the Maxwell equations, specified by appropriate initial data, to a Coulomb solution.  Equivalently, utilizing initial data to identify the asymptotic Coulomb solution, we can reformulate our result in terms of decay of normalized solutions to zero.  We describe this procedure below.

Integrating \eqref{MaxwellThree} and \eqref{MaxwellFive} over the unit sphere, we find the relations
\begin{align*}
\int_{S^2}\partial_{u}(r^2\sigma) &= 0,\\
\int_{S^2}\partial_{v}(r^2\sigma) &=0,
\end{align*}
with similar relations for $\rho$ following from \eqref{MaxwellFour} and \eqref{MaxwellSix}.  That is, we have conservation of the integral quantities, 
\begin{align}
\begin{split}
\int_{S^2(u,v)} \rho &= \int_{S^2(\tilde{u},\tilde{v})} \rho,\\
\int_{S^2(u,v)} \sigma &= \int_{S^2(\tilde{u},\tilde{v})} \sigma,
\end{split}
\end{align}
for general solutions to the Maxwell equations.  This phenomenon is often referred to as conservation of charge; see \cite{Andersson} for an excellent discussion.

The conservation of charge above allows us to identify the asymptotic Coulomb solution, owing to the preservation of its parameters $B$ and $E$.  Given initial data on $C_{u_0,v_0}$ \eqref{nullHypersurfaces} for a Maxwell tensor $F$, we identify its Coulomb parameters by integrating
\begin{align*}
E &= \frac{1}{4\pi}\int_{S^2(u_0,v_0)} \rho,\\
B &= \frac{1}{4\pi}\int_{S^2(u_0,v_0)} \sigma.
\end{align*}
Indeed, integration over any sphere of symmetry lying in $C_{u_0,v_0}$ yields the parameters.  We denote the associated Coulomb solution by $F_{stationary}$, with null decomposition
\begin{align}\label{Fstationary}
\begin{split}
&\alpha_{A} = \underline{\alpha}_{A} = 0,\\
&\bar\rho = \frac{1}{4\pi r(u,v)^2}\int_{S^2(u,v)} \rho,\\
&\bar\sigma = \frac{1}{4\pi r(u,v)^2}\int_{S^2(u,v)} \sigma,
\end{split}
\end{align}
where we have utilized conservation of charge.  Subtracting the associated initial data, we can form a normalized initial data set, with associated normalized solution $F - F_{stationary}$, such that decay of the normalized solution to zero is equivalent to decay of the general solution $F$ to the Coulomb solution $F_{stationary}$.  We remark that this normalization procedure is possible for a variety of initial data specifications, beyond the null hypersurfaces $C_{u_0,v_0}$.

\subsection{The Spin $\pm$1 Teukolsky Equations}
The decoupling of $\alpha_{A}$ and $\underline{\alpha}_{A}$ in the null decomposed Maxwell system of Proposition \ref{nullMaxwell} was established by Teukolsky \cite{Teukolsky} on vacuum, Petrov type-D backgrounds by means of certain algebraic and differential manipulations.  The procedure does not use the vacuum assumption in any meaningful way, so there is a straightforward extension to our setting:

\begin{lemma}
The Maxwell components $\alpha_{A}$ and $\underline{\alpha}_{A}$ satisfy the spin $\pm$1 Teukolsky equations
\begin{align}\label{TeukolskyEqn1}
\begin{split}
&\slashed{\nabla}_{L}\slashed{\nabla}_{\underline{L}}(r\alpha_{A}) + \frac{2}{r}\left(1-\frac{3M}{r}\right)\slashed{\nabla}_{\underline{L}}(r\alpha_{A})\\
&- (1-\mu)\slashed{\Delta}_{\mathcal{L}(-1)}(r\alpha_{A}) + \frac{1-\mu}{r^2}(r\alpha_{A}) = 0,
\end{split}
\end{align}
\begin{align}\label{TeukolskyEqn2}
\begin{split}
&\slashed{\nabla}_{\underline{L}}\slashed{\nabla}_{L}(r\underline{\alpha}_{A}) - \frac{2}{r}\left(1-\frac{3M}{r}\right)\slashed{\nabla}_{L}(r\underline{\alpha}_{A})\\
&- (1-\mu)\slashed{\Delta}_{\mathcal{L}(-1)}(r\underline{\alpha}_{A}) + \frac{1-\mu}{r^2}(r\underline{\alpha}_{A}) = 0.
\end{split}
\end{align}
\end{lemma}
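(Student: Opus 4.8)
\emph{Proof strategy.} The plan is to derive the spin $+1$ equation \eqref{TeukolskyEqn1} for $\alpha_A$; the spin $-1$ equation \eqref{TeukolskyEqn2} for $\underline{\alpha}_A$ follows by the formal interchange $L\leftrightarrow\underline{L}$, which also flips the sign of the radial derivative since $Lr = 1-\mu = -\underline{L}r$. First I would use the null-decomposed Maxwell equation \eqref{MaxwellTwo} of Proposition \ref{nullMaxwell} to write
\[
\slashed{\nabla}_{\underline{L}}(r\alpha_A) \;=\; r(1-\mu)\big(\slashed{\nabla}_A\rho + \slashed{\epsilon}_{AB}\slashed{\nabla}^B\sigma\big) \;=:\; rY_A,
\]
and then apply $\slashed{\nabla}_L$ to both sides, the goal being to re-express the resulting right-hand side entirely in terms of $r\alpha_A$ and its derivatives, with $\rho$ and $\sigma$ eliminated.

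The expansion of $\slashed{\nabla}_L(rY_A)$ rests on three inputs. (i) The radial derivatives $Lr = 1-\mu$ and $L(1-\mu) = -\mu_r(1-\mu)$, together with the fact that $r$, $\mu$, and the complex structure $\slashed{\epsilon}_A{}^{B} = \slashed{\epsilon}_{AB}\slashed{g}^{BC}$ are constant on each sphere of symmetry. (ii) The commutator of $\slashed{\nabla}_L$ with the spherical gradient: since in the $(u,v,x^A)$ chart one has $\Gamma^B_{LA} = \tfrac{1-\mu}{r}\delta_A^{B}$, it follows that $[\slashed{\nabla}_L,\slashed{\nabla}_A]f = -\tfrac{1-\mu}{r}\slashed{\nabla}_A f$ on sphere scalars, and (using $\slashed{\nabla}_L\slashed{\epsilon}_A{}^{B} = 0$) the same commutator governs $\slashed{\nabla}_L(\slashed{\epsilon}_{AB}\slashed{\nabla}^B f)$. (iii) The remaining Maxwell equations \eqref{MaxwellFive} and \eqref{MaxwellSix}, which convert $\slashed{\nabla}_L\rho$ and $\slashed{\nabla}_L\sigma$ into $\slashed{div}\,\alpha$, $\slashed{curl}\,\alpha$ plus zeroth-order multiples of $\rho$, $\sigma$. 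Feeding (ii) and (iii) into the Leibniz expansion, every term still carrying a bare $\slashed{\nabla}_A\rho$ or $\slashed{\epsilon}_{AB}\slashed{\nabla}^B\sigma$ reassembles — via \eqref{MaxwellTwo} once more — into a multiple of $Y_A$, and the algebraic identity $2(1-\mu) + r\mu_r = 2\big(1-\tfrac{3M}{r}\big)$ fixes the coefficient, producing precisely the first-order term $\tfrac{2}{r}\big(1-\tfrac{3M}{r}\big)\slashed{\nabla}_{\underline{L}}(r\alpha_A)$.

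It remains to handle the genuinely second-order term $r(1-\mu)\big(\slashed{\nabla}_A\slashed{div}\,\alpha - \slashed{\epsilon}_{AB}\slashed{\nabla}^B\slashed{curl}\,\alpha\big)$. Here I would invoke the Bochner/Hodge identity on the round sphere $S^2(u,v)$ of radius $r$, whose Gauss curvature is $r^{-2}$: for a section $\omega$ of $\mathcal{L}(-1)$,
\[
\slashed{\nabla}_A\slashed{div}\,\omega - \slashed{\epsilon}_{AB}\slashed{\nabla}^B\slashed{curl}\,\omega \;=\; \slashed{\Delta}_{\mathcal{L}(-1)}\omega_A - \frac{1}{r^2}\omega_A .
\]
Pulling the ($r$-constant) factor $r$ inside the spherical operators and applying this with $\omega = r\alpha$ turns the second-order term into $(1-\mu)\slashed{\Delta}_{\mathcal{L}(-1)}(r\alpha_A) - \tfrac{1-\mu}{r^2}(r\alpha_A)$. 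Collecting everything and rearranging yields \eqref{TeukolskyEqn1}, and the identical argument starting from \eqref{MaxwellOne}, \eqref{MaxwellThree}, \eqref{MaxwellFour} gives \eqref{TeukolskyEqn2}.

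I expect the main obstacle to be the bookkeeping in the middle step: tracking the $r$-weights and the several connection terms generated by $[\slashed{\nabla}_L,\slashed{\nabla}_A]$ and by $L(1-\mu)$, and verifying that all $\rho$- and $\sigma$-dependent contributions recombine into exactly the right multiple of $\slashed{\nabla}_{\underline{L}}(r\alpha_A)$ rather than leaving a spurious source term — this is the ``Teukolsky decoupling'', and it is the only place where the precise form of the coefficients in Proposition \ref{nullMaxwell} is used. A secondary point requiring care is the sign of the Gauss-curvature term in the Hodge identity, which must come out so as to match the $+\tfrac{1-\mu}{r^2}(r\alpha_A)$ appearing in the stated equations.
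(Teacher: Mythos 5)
Your proposal is correct and follows essentially the same route as the paper's proof: cross-differentiate the first-order Maxwell equation for the extreme component, eliminate $\slashed{\nabla}\rho$ and $\slashed{\nabla}\sigma$ via the remaining null Maxwell equations together with the commutation $[r\slashed{\nabla}_{A},\slashed{\nabla}_{L}]=[r\slashed{\nabla}_{A},\slashed{\nabla}_{\underline{L}}]=0$, reassemble the lower-order terms through the original Maxwell equation and the identity $r\mu_{r}+2(1-\mu)=2\left(1-\tfrac{3M}{r}\right)$, and convert the second-order angular combination with the Hodge identity $\slashed{\nabla}_{A}\slashed{div}\,\omega-\slashed{\epsilon}_{AB}\slashed{\nabla}^{B}\slashed{curl}\,\omega=\slashed{\Delta}_{\mathcal{L}(-1)}\omega_{A}-\tfrac{1}{r^{2}}\omega_{A}$. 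The only (immaterial) difference is that you treat $\alpha_{A}$ explicitly and obtain $\underline{\alpha}_{A}$ by the mirror argument, whereas the paper does the reverse.
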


\begin{proof}
We derive the decoupled equation for $\underline{\alpha}_{A}$, that for $\alpha_{A}$ being analogous.  At the outset, we note that
\begin{equation}
\slashed{\nabla}_{L} \slashed{\epsilon}_{AB} = 0, \hspace{5mm} \slashed{\nabla}_{L}\slashed{g}_{AB} = 0, \hspace{5mm} [r\slashed{\nabla}_{A},\slashed{\nabla}_{L}] = 0,
\end{equation}
with similar statements holding for $\underline{L}$.  

Multiplying \eqref{MaxwellOne} by $r$ and applying the operator $\slashed{\nabla}_{\underline{L}}$ to the result, we deduce
\[ \slashed{\nabla}_{\underline{L}}\slashed{\nabla}_{L}(r\underline{\alpha}_{A}) +(1-\mu)\mu_{r}\left(r(\slashed{\nabla}_{A}\rho - \slashed{\epsilon}_{AB}\slashed{\nabla}^{B}\sigma)\right) + (1-\mu)r\left(\slashed{\nabla}_{A}\slashed{\nabla}_{\underline{L}}\rho - \slashed{\epsilon}_{AB}\slashed{\nabla}^{B}\slashed{\nabla}_{\underline{L}}\sigma\right) = 0.\]

Rewriting the last term with \eqref{MaxwellThree} and \eqref{MaxwellFour}, we find
\begin{align*}
&\slashed{\nabla}_{\underline{L}}\slashed{\nabla}_{L}(r\underline{\alpha}_{A}) +(1-\mu)\mu_{r}\left(r(\slashed{\nabla}_{A}\rho - \slashed{\epsilon}_{AB}\slashed{\nabla}^{B}\sigma)\right) + 2(1-\mu)^2\left(\slashed{\nabla}_{A}\rho - \slashed{\epsilon}_{AB}\slashed{\nabla}^{B}\sigma\right)\\
&+ (1-\mu)r\left(-\slashed{\nabla}_{A}\slashed{div}\ \underline{\alpha} + \slashed{\epsilon}_{AB}\slashed{\nabla}^{B}\slashed{curl}\ \underline{\alpha}\right) = 0.
\end{align*}

Application of \eqref{MaxwellOne} and the relation
\[ \slashed{\nabla}_{A}\slashed{div}\ \omega - \slashed{\epsilon}_{AB}\slashed{\nabla}^{B}\slashed{curl}\ \omega = \slashed{\Delta}_{\mathcal{L}(-1)}\omega_{A} - \frac{1}{r^2}\omega_{A},\]
which holds for spherical one-forms $\omega$, yield the spin -1 Teukolsky equation for $\underline{\alpha}$:
\begin{align*}
&\slashed{\nabla}_{\underline{L}}\slashed{\nabla}_{L}(r\underline{\alpha}_{A}) -\left(\mu_{r} + 2\frac{1-\mu}{r}\right)\slashed{\nabla}_{L}(r\underline{\alpha}_{A}) - (1-\mu)\slashed{\Delta}_{\mathcal{L}(-1)}(r\underline{\alpha}_{A}) + \frac{1-\mu}{r^2}(r\underline{\alpha}_{A}) = 0,\\
&\slashed{\nabla}_{\underline{L}}\slashed{\nabla}_{L}(r\underline{\alpha}_{A}) - \frac{2}{r}\left(1-\frac{3M}{r}\right)\slashed{\nabla}_{L}(r\underline{\alpha}_{A}) - (1-\mu)\slashed{\Delta}_{\mathcal{L}(-1)}(r\underline{\alpha}_{A}) + \frac{1-\mu}{r^2}(r\underline{\alpha}_{A}) = 0.
\end{align*}

\end{proof}

\subsection{The Transformation Theory}
Lacking Lagrangian structure, the spin $\pm$1 Teukolsky equations prove difficult to estimate using standard vector field multiplier methods.  However, certain higher order quantities, obtained from $\alpha_{A}$ and $\underline{\alpha}_{A}$ by differential transformations, satisfy equations equipped with such structure and, moreover, having favorable analytic content.  As the starting point for controlling the Maxwell tensor, these quantities are essential to our analysis.

Before proceeding, we remark that, as with the decoupling of the previous subsection, such a transformation theory is well-known on vacuum, Petrov type-D backgrounds (see Chandrasekhar \cite{Chandra1}, Wald \cite{Wald}, and the later work of Aksteiner-B\"{a}ckdahl \cite{Aksteiner}).  The extension to non-vacuum settings, as in the present case and e.g.  \cite{Wald2, Araneda}, appears to be less developed.

We define $P_{A}$ and $\underline{P}_{A}$, each a section of $\mathcal{L}(-1)$, in terms of $\alpha_{A}$ and $\underline{\alpha}_{A}$ as follows:
\begin{align}
P_{A} &:= \frac{r}{1-\mu}\slashed{\nabla}_{\underline{L}}(r\alpha_{A}),\label{P}\\
\underline{P}_{A} &:= \frac{r}{1-\mu}\slashed{\nabla}_{L}(r\underline{\alpha}_{A}) \label{underlineP}.
\end{align}
Observe that both $P_{A}$ and $\underline{P}_{A}$ are regular at the horizons.

\begin{lemma}
The quantities $P_{A}$ and $\underline{P}_{A}$ satisfy the Fackerell-Ipser equation
\begin{align}\label{FIeqn}
\begin{split}
&\slashed{\Box}_{\mathcal{L}(-1)} P_{A} = V P_{A},\\
&\slashed{\Box}_{\mathcal{L}(-1)} \underline{P}_{A} = V \underline{P}_{A},
\end{split}
\end{align}
with $V = \frac{1}{r^2}(1-\mu) + \Lambda$.
\end{lemma}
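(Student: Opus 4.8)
The plan is as follows. The two identities in \eqref{FIeqn} are mirror images of one another under the interchange $L \leftrightarrow \underline{L}$, $\alpha \leftrightarrow \underline{\alpha}$, $P \leftrightarrow \underline{P}$ (with \eqref{TeukolskyEqn1} and \eqref{TeukolskyEqn2} exchanged), so I would prove only the equation for $\underline{P}_{A}$; write $\Phi_{A} := r\underline{\alpha}_{A}$, so that $\underline{P}_{A} = \frac{r}{1-\mu}\slashed{\nabla}_{L}\Phi_{A}$ by \eqref{underlineP}. \emph{Step 1 (the wave operator in the double-null frame).} Starting from $g_{M,\Lambda} = -4(1-\mu)\,du\,dv + \slashed{g}_{AB}dx^{A}dx^{B}$ and \eqref{projCov}, I would compute $\slashed{\nabla}^{a}\slashed{\nabla}_{a}$ in the frame $\{L,\underline{L},e_{A}\}$, obtaining for any section $\psi$ of $\mathcal{L}(-1)$
\begin{equation*}
\slashed{\Box}_{\mathcal{L}(-1)}\psi = -\frac{1}{1-\mu}\slashed{\nabla}_{L}\slashed{\nabla}_{\underline{L}}\psi + \frac{1}{r}\left(\slashed{\nabla}_{L}\psi - \slashed{\nabla}_{\underline{L}}\psi\right) + \slashed{\Delta}_{\mathcal{L}(-1)}\psi,
\end{equation*}
the inputs being $\slashed{\nabla}_{L}r = -\slashed{\nabla}_{\underline{L}}r = 1-\mu$, the parallelism $\slashed{\nabla}_{L}\slashed{g}_{AB} = \slashed{\nabla}_{\underline{L}}\slashed{g}_{AB} = 0$, and the vanishing of the Christoffel symbols $\Gamma^{c}_{uv}$. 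From $[r\slashed{\nabla}_{A},\slashed{\nabla}_{L}] = 0$ (and its $\underline{L}$-analogue) together with $\slashed{\nabla}_{L}r = 1-\mu$ I would extract $[\slashed{\nabla}_{L},\slashed{\nabla}_{A}] = -\tfrac{1-\mu}{r}\slashed{\nabla}_{A}$ and, iterating and contracting against the parallel inverse metric $\slashed{g}^{AB}$, the identity $\slashed{\nabla}_{L}\slashed{\Delta}_{\mathcal{L}(-1)}\psi = \slashed{\Delta}_{\mathcal{L}(-1)}\slashed{\nabla}_{L}\psi - \tfrac{2(1-\mu)}{r}\slashed{\Delta}_{\mathcal{L}(-1)}\psi$; these, and their $L\to\underline{L}$ versions, are the only commutations needed.

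\emph{Step 2 (a clean form for $\slashed{\nabla}_{\underline{L}}\underline{P}_{A}$).} Applying $\slashed{\nabla}_{\underline{L}}$ to $\underline{P}_{A} = \frac{r}{1-\mu}\slashed{\nabla}_{L}\Phi_{A}$, using the spin $-1$ Teukolsky equation \eqref{TeukolskyEqn2} to eliminate $\slashed{\nabla}_{\underline{L}}\slashed{\nabla}_{L}\Phi_{A}$, and simplifying the resulting first-order coefficient via the identity $\tfrac{2}{r}(1-\tfrac{3M}{r}) = \mu_{r} + \tfrac{2(1-\mu)}{r}$ already used to derive the Teukolsky equations, I expect the clean formula
\begin{equation*}
\slashed{\nabla}_{\underline{L}}\underline{P}_{A} = \frac{1-\mu}{r}\,\underline{P}_{A} + r\,\slashed{\Delta}_{\mathcal{L}(-1)}\Phi_{A} - \frac{1}{r}\,\Phi_{A},
\end{equation*}
where at the end one rewrites $\slashed{\nabla}_{L}\Phi_{A} = \frac{1-\mu}{r}\underline{P}_{A}$. (Equivalently, \eqref{MaxwellOne} gives $\underline{P}_{A} = -r^{2}(\slashed{\nabla}_{A}\rho - \slashed{\epsilon}_{AB}\slashed{\nabla}^{B}\sigma)$, a useful consistency check.)

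\emph{Step 3 (assembly).} I would insert the Step 2 formula, together with $\slashed{\nabla}_{L}\Phi_{A} = \frac{1-\mu}{r}\underline{P}_{A}$, into the double-null expression for $\slashed{\Box}_{\mathcal{L}(-1)}\underline{P}_{A}$, differentiating once more in $L$ using the commutators of Step 1. A cascade of cancellations should take place: the $\slashed{\nabla}_{L}\underline{P}_{A}$ and $\slashed{\Delta}_{\mathcal{L}(-1)}\underline{P}_{A}$ contributions cancel against their counterparts coming from the box formula, and every occurrence of $\slashed{\Delta}_{\mathcal{L}(-1)}\Phi_{A}$ and of $\Phi_{A}$ cancels, leaving
\begin{equation*}
\slashed{\Box}_{\mathcal{L}(-1)}\underline{P}_{A} = \left(-\frac{1}{1-\mu}\slashed{\nabla}_{L}\!\left(\frac{1-\mu}{r}\right) + \frac{1}{r^{2}} - \frac{1-\mu}{r^{2}}\right)\underline{P}_{A}.
\end{equation*}
Since $\slashed{\nabla}_{L}(\frac{1-\mu}{r}) = -\frac{(1-\mu)(r\mu_{r}+1-\mu)}{r^{2}}$, the bracket collapses to $\frac{\mu_{r}}{r} + \frac{1}{r^{2}}$; inserting $\mu = \frac{2M}{r}+\frac{1}{3}\Lambda r^{2}$, so that $\mu_{r} = -\frac{2M}{r^{2}} + \frac{2}{3}\Lambda r$, one checks $\frac{\mu_{r}}{r}+\frac{1}{r^{2}} = \frac{1-\mu}{r^{2}} + \Lambda = V$, as claimed. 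The argument for $P_{A}$ is the same after the interchange above, using \eqref{TeukolskyEqn1} and \eqref{MaxwellTwo}.

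I expect the only real difficulty to be organizational: deriving the double-null form of $\slashed{\Box}_{\mathcal{L}(-1)}$ with the correct first-order coefficient; keeping the commutator contributions straight — in particular the point that coefficient functions such as $\gamma^{AB}$ are \emph{not} $\slashed{\nabla}_{L}$-parallel whereas $\slashed{g}^{AB}$ is, which fixes the constant $2$ in the $[\slashed{\nabla}_{L},\slashed{\Delta}_{\mathcal{L}(-1)}]$ identity — so that the cancellations in Step 3 are manifest; and verifying at the end that the surviving $r$-rational potential is exactly $\frac{1-\mu}{r^{2}}+\Lambda$, it being here that the explicit form of $\mu$, and hence the cosmological constant, enters $V$.
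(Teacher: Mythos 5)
Your proposal is correct: I checked the double-null expression for $\slashed{\Box}_{\mathcal{L}(-1)}$, the intermediate identity $\slashed{\nabla}_{\underline{L}}\underline{P}_{A} = \tfrac{1-\mu}{r}\underline{P}_{A} + r\slashed{\Delta}_{\mathcal{L}(-1)}(r\underline{\alpha}_{A}) - \tfrac{1}{r}(r\underline{\alpha}_{A})$, and the final collapse of the potential to $\tfrac{\mu_{r}}{r}+\tfrac{1}{r^{2}} = \tfrac{1-\mu}{r^{2}}+\Lambda$, and all are right. The key input is the same as in the paper, namely the spin $-1$ Teukolsky equation \eqref{TeukolskyEqn2} together with the commutation $[r\slashed{\nabla}_{A},\slashed{\nabla}_{L}]=0$ (equivalently $[\slashed{\nabla}_{L},r^{2}\slashed{\Delta}_{\mathcal{L}(-1)}]=0$), but the organization differs: the paper rewrites the first two Teukolsky terms as the conjugated derivative $\tfrac{1-\mu}{r^{2}}\slashed{\nabla}_{\underline{L}}\bigl(\tfrac{r^{2}}{1-\mu}\slashed{\nabla}_{L}(r\underline{\alpha}_{A})\bigr)$, multiplies by $\tfrac{r^{2}}{1-\mu}$, applies $\slashed{\nabla}_{L}$ and commutes, landing on the wave equation $\slashed{\nabla}_{L}\slashed{\nabla}_{\underline{L}}\underline{\phi}_{A} - (1-\mu)\slashed{\Delta}_{\mathcal{L}(-1)}\underline{\phi}_{A} + \tfrac{1-\mu}{r^{2}}\underline{\phi}_{A}=0$ for $\underline{\phi}_{A}=r\underline{P}_{A}$, and then states that the rescaling $\underline{\phi}_{A}=r\underline{P}_{A}$ yields \eqref{FIeqn}, never writing the double-null form of the box. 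You instead verify \eqref{FIeqn} directly from an explicit box formula, using Teukolsky only to evaluate $\slashed{\nabla}_{\underline{L}}\underline{P}_{A}$. The trade-off is that the paper's conjugation is shorter and produces the recognizable Dafermos--Holzegel--Rodnianski/Pasqualotto form, whereas your route makes explicit both the box computation and the final rescaling the paper leaves implicit, showing concretely where $\mu$, and hence $\Lambda$, enters $V$; your closing caveats (the first-order coefficient of the box, and that $\slashed{g}^{AB}$ rather than the unit-sphere metric is $\slashed{\nabla}_{L}$-parallel, which fixes the factor $2$ in $[\slashed{\nabla}_{L},\slashed{\Delta}_{\mathcal{L}(-1)}]$) are precisely the points where care is needed.
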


\begin{proof}
We present the argument for $\underline{P}_{A}$, that for $P_{A}$ being analogous.  Note that the corresponding Teukolsky equation \eqref{TeukolskyEqn2} can be rewritten as 
\[\frac{1-\mu}{r^2}\slashed{\nabla}_{\underline{L}}\left(\frac{r^2}{1-\mu}\slashed{\nabla}_{L}(r\underline{\alpha}_{A})\right) - (1-\mu)\slashed{\Delta}_{\mathcal{L}(-1)}(r\underline{\alpha}_{A}) + \frac{1-\mu}{r^2}(r\underline{\alpha}_{A}) = 0.\]
Multiplying the equation by $\frac{r^2}{1-\mu}$ and applying the operator $\slashed{\nabla}_{L}$, we find
\[\slashed{\nabla}_{L}\slashed{\nabla}_{\underline{L}}\left(\frac{r^2}{1-\mu}\slashed{\nabla}_{L}(r\underline{\alpha}_{A})\right) - \slashed{\nabla}_{L}\left(r^2\slashed{\Delta}_{\mathcal{L}(-1)}(r\underline{\alpha}_{A})\right) + \slashed{\nabla}_{L}(r\underline{\alpha}_{A}) = 0,\]
or
\[\slashed{\nabla}_{L}\slashed{\nabla}_{\underline{L}}\left(\frac{r^2}{1-\mu}\slashed{\nabla}_{L}(r\underline{\alpha}_{A})\right) - r^2\slashed{\Delta}_{\mathcal{L}(-1)}\left(\slashed{\nabla}_{L}(r\underline{\alpha}_{A})\right) + \slashed{\nabla}_{L}(r\underline{\alpha}_{A}) = 0,\]
using the commutation relation $[\slashed{\nabla}_{L},r^2\slashed{\Delta}_{\mathcal{L}(-1)}] = 0$.

Introducing the quantity
\[\underline{\phi}_{A} := \frac{r^2}{1-\mu}\slashed{\nabla}_{L}(r\underline{\alpha}_{A}),\]
we rewrite the expression in terms of $\underline{\phi}_{A}$:
\[\slashed{\nabla}_{L}\slashed{\nabla}_{\underline{L}}\underline{\phi}_{A} - (1-\mu)\slashed{\Delta}_{\mathcal{L}(-1)}\underline{\phi}_{A} + \frac{1-\mu}{r^2}\underline{\phi}_{A} = 0.\]

The above is the form seen in Dafermos-Holzegel-Rodnianski \cite{DHR} and Pasqualotto \cite{Pasqualotto}.  With the rescaling $r\underline{P}_{A} = \underline{\phi}_{A}$, the quantity $\underline{P}_{A}$ is found to satisfy the Fackerell-Ipser equation \eqref{FIeqn} as claimed.
\end{proof}

\section{Analysis of the Fackerell-Ipser Equation}

In this section, we analyze co-vectors $\Psi_{A}$ satisfying the Fackerell-Ipser equation \eqref{FIeqn}.  In particular, the estimates derived in this section hold for $P_{A}$ and $\underline{P}_{A}$.  The analysis is largely based upon the ideas and notation of \cite{DRdS, HKW}.

\subsection{Poincar\'{e} Inequality}
 
For the bundle $\mathcal{L}(-1)$ of spherical co-vectors, the spectrum of the associated spherical Laplacian consists of eigenvalues 
\begin{equation}
\lambda_{m,\ell} = \frac{(1 - \ell(\ell + 1))}{r^2},
\end{equation}
with $\ell \geq 1$ and $|m| \leq \ell$.  The identity
\begin{equation}
\slashed{\Delta} |\omega|^2 = 2\slashed{\Delta}_{\mathcal{L}(-1)}\omega \cdot \omega + 2 |\slashed\nabla \omega|^2
\end{equation}
yields the Poincar\'{e} inequality
\begin{equation}\label{PoincarePsi2}
\int_{S^2(t,r)} |\slashed\nabla\omega|^2 \geq \frac{1}{r^2}\int_{S^2(t,r)}|\omega|^2.
\end{equation}

Here we have used the notation
\begin{equation}\label{angularGradient}
|\slashed\nabla\omega|^2= g^{AB}g^{CD}(\slashed\nabla_{A}\omega)_{C}(\slashed\nabla_{B}\omega)_{D}
\end{equation}
for the angular gradient.

\subsection{Stress-Energy Formalism}
Associated with our Fackerell-Ipser equation is the stress-energy tensor
\begin{equation}\label{stressTensor}
T_{ab}[\Psi] := \slashed{\nabla}_a \Psi\cdot \slashed{\nabla}_b \Psi - \frac{1}{2}g_{ab}(\slashed{\nabla}^{c}\Psi\cdot\slashed{\nabla}_{c}\Psi + V|\Psi|^2),
\end{equation}
where we emphasize that
\begin{align*}
|\Psi|^2 &= g^{AB}\Psi_{A}\Psi_{B},\\
\slashed{\nabla}_{a}\Psi\cdot\slashed{\nabla}_{b}\Psi &= g^{AB}(\slashed\nabla_{a}\Psi)_{A}(\slashed\nabla_{b}\Psi)_{B},\\
\slashed{\nabla}^{c}\Psi\cdot\slashed{\nabla}_{c}\Psi &= g^{ab}g^{AB}(\slashed\nabla_{a}\Psi)_{A}(\slashed\nabla_{b}\Psi)_{B}.
\end{align*}

Applying a vector-field multiplier $X^{b}$, we define the energy current
\begin{equation}
J^{X}_{a}[\Psi] := T_{ab}[\Psi]X^{b}
\end{equation}  
and the density
\begin{equation}
K^{X}[\Psi] := \nabla^{a}J^{X}_{a}[\Psi] = \nabla^{a}(T_{ab}[\Psi]X^{b}).
\end{equation}

As well, we will have occasion to use the weighted energy current
\begin{equation}\label{Jweight}
J^{X,\omega^{X}}_{a}[\Psi] := J^{X}_{a}[\Psi] + \frac{1}{4}\omega^{X}\nabla_{a}|\Psi|^2 -\frac{1}{4}\nabla_{a}\omega^{X}|\Psi|^2, 
\end{equation}
with weighted density
\begin{equation}\label{Kweight}
K^{X,\omega^{X}}[\Psi] := K^{X}[\Psi] + \frac{1}{4}\omega^{X}\Box |\Psi|^2 -\frac{1}{4} \Box \omega^{X} |\Psi|^2,
\end{equation}
for a suitable scalar weight function $\omega^{X}$.

The current $J^{X}_{a}[\Psi]$ and density $K^{X}[\Psi]$ serve as a convenient notation to express the spacetime Stokes' theorem
\begin{equation}\label{stokes}
\int_{\mathcal{\partial D}} J^{X}_{a}[\Psi]\eta^{a} = \int_{\mathcal{D}} K^{X}[\Psi],
\end{equation} 
integrated over a spacetime region $\mathcal{D}$ with boundary $\partial\mathcal{D}$.

Likewise, the weighted quantities satisfy
\begin{equation}
\int_{\mathcal{\partial D}} J^{X,\omega^{X}}_{a}[\Psi]\eta^{a} = \int_{\mathcal{D}} K^{X,\omega^{X}}[\Psi].
\end{equation} 

The stress-energy tensor $T_{ab}[\Psi]$ defined above has non-trivial divergence
\begin{equation}\label{stressDivergence}
\nabla^{a}T_{ab}[\Psi] = -\frac{1}{2}\nabla_{b}V|\Psi|^2 + \slashed{\nabla}^{a}\Psi[\slashed{\nabla}_{a},\slashed{\nabla}_{b}]\Psi,
\end{equation}
where we note that the commutator $[\slashed{\nabla}_{a},\slashed{\nabla}_{b}]$ vanishes when contracted with a multiplier invariant under the angular Killing fields in $\Omega$.  In particular, all such multipliers considered in the subsequent analysis have this property.

We remark that, owing to the positivity of the potential term $V$, the stress-energy tensor satisfies a positive energy condition.  Namely, given future-directed, timelike vector fields $X_1$ and $X_2$, we have
\begin{equation}\label{energyCondition}
T_{ab}[\Psi]X_1^{a}X_2^{b} \geq 0.
\end{equation}

\subsection{Additional Notation}

Our estimates are expressed in terms of the null hypersurfaces $C_{\tau,\tau}$ \eqref{nullHypersurfaces}.  For simplicity, we denote the hypersurfaces
\begin{equation}\label{decayFoliation}
\Sigma_{\tau} := C_{\tau,\tau} = \underline{C}_{\tau,\tau} \cup \overline{C}_{\tau,\tau}
\end{equation}
and the spacetime region
\begin{equation}
\mathcal{R}(\tau',\tau) := J^{+}(\Sigma_{\tau'}) \cap J^{-}(\Sigma_{\tau}),
\end{equation}
where $0 \leq \tau' < \tau$.

Expressed in the Eddington-Finkelstein coordinates, the relevant volume forms are written
\begin{align}\label{volumeForms}
\begin{split}
dVol_{\underline{C}_{\tau,\tau}} &= (1-\mu)r^2\sin\theta dud\theta d\phi,\\
dVol_{\overline{C}_{\tau,\tau}} &= (1-\mu)r^2\sin\theta dvd\theta d\phi,\\
dVol_{\mathcal{R}(\tau',\tau)} &= 4(1-\mu)r^2\sin\theta dudvd\theta d\phi.
\end{split}
\end{align}

In addition, we define the boundary regions
\begin{align}
\begin{split}
&\mathcal{H}^{+}(\tau',\tau) := \mathcal{H}^{+}\cap \mathcal{R}(\tau',\tau),\\
&\overline{\mathcal{H}}^{+}(\tau',\tau) := \overline{\mathcal{H}}^{+}\cap \mathcal{R}(\tau',\tau).
\end{split}
\end{align}

We remark that the specifications above are made for the sake of convenience; the subsequent decay estimates can be expressed with respect to a broad class of foliations.  

Throughout the remainder of this work, we use $c(M,\Lambda)$ and $C(M,\Lambda)$ to denote small and large positive constants, respectively, each depending upon the parameters $M$ and $\Lambda$.

\subsection{The Killing Multiplier $T$}
Applying the static Killing field $T$ as a multiplier, we observe that the density $K^{T}[\Psi]$ vanishes in consequence of $V$ being radial, such that $(\nabla^{a}T_{ab}[\Psi])T^{b}$ vanishes \eqref{stressDivergence}, and $T$ being Killing, such that $\pi_{T}^{ab} = \nabla^{(a}T^{b)}$ vanishes.

Integrating over $\mathcal{R}(\tau',\tau)$, we obtain the identity \eqref{stokes}
\[ \int_{\Sigma_{\tau}} J^{T}_{a}[\Psi]\eta^{a} +\int_{\mathcal{H}^{+}(\tau',\tau)} J^{T}_{a}[\Psi]\eta^{a} +\int_{\overline{\mathcal{H}}^{+}(\tau',\tau)} J^{T}_{a}[\Psi]\eta^{a} = \int_{\Sigma_{\tau'}} J^{T}_{a}[\Psi]\eta^{a}.\]

Defining the $T$-energy by
\begin{equation}\label{Tenergy}
E^{T}_{\Psi}(\tau) := \int_{\Sigma_{\tau}} J^{T}_{a}[\Psi]\eta^{a},
\end{equation}
the identity above yields the estimates
\begin{align}\label{Testimates}
\begin{split}
E_{\Psi}^{T}(\tau) &\leq E_{\Psi}^{T}(\tau'),\\
\int_{\mathcal{H}^{+}(\tau',\tau)} J^{T}_{a}[\Psi]\eta^{a} &\leq E_{\Psi}^{T}(\tau'),\\
\int_{\overline{\mathcal{H}}^{+}(\tau',\tau)} J^{T}_{a}[\Psi]\eta^{a} &\leq E_{\Psi}^{T}(\tau'),
\end{split}
\end{align}
for all $0 \leq \tau' < \tau$. 

With our energy condition \eqref{energyCondition}, we note that the $T$-energy above is non-negative, degenerating at each of the horizons.

\subsection{The Red-Shift Multiplier $N$}

The static Killing field $T$ degenerates at each horizon, becoming null; consequently, the $T$-energy defined above is degenerate and unsuited for proving boundedness and decay results up to and including the horizons.  To circumvent this, we utilize a red-shift multiplier, of the sort introduced in \cite{DR}.  We recall the details below.

We work on the event horizon $\mathcal{H}^{+}$, away from the bifurcation sphere.  Letting $Y$ be a null vector transversal to the Killing field $T$, itself tangential on $\mathcal{H}^{+}$, we specify $Y$ by
\begin{enumerate}
\item $Y$ is future-directed, with normalization $g_{M,\Lambda}(Y,T) = -2$,
\item $Y$ is invariant under $T$ and the $\Omega_{i}$,
\item On $\mathcal{H}^{+}$, $\nabla_{Y}Y = -\sigma(Y + T)$, for $\sigma \in \R$ as yet unchosen.
\end{enumerate}

Taking $e_{A}$ as orthonormal basis vectors, tangential to the spheres of symmetry, we calculate in the normalized null frame $\{ T, Y, e_1, e_2 \}$:
\begin{align}\label{Ndeformation}
\begin{split}
\nabla_{T}Y &=  -\kappa Y,\\
\nabla_{Y}Y &=  -\sigma(T + Y),\\
\nabla_{e_{A}}Y &=  h^{B}_{A}e_{B},
\end{split}
\end{align}
with $\kappa(M,\Lambda)$ being the positive surface gravity on Schwarzschild-de Sitter spacetime, and with $h_{AB}$ being the second fundamental form of the round sphere of radius $r = r_{b}$ (recall that we work on the event horizon) with respect to $Y$.
 
We compute
\begin{align*}
K^{Y}[\Psi] &= (\nabla^{a}T_{ab}[\Psi])Y^{b} + T_{ab}[\Psi]\nabla^{a}Y^{b}\\
&= -\frac{1}{2}Y^{a}\nabla_{a}V|\Psi|^2 + T_{ab}[\Psi]\nabla^{a}Y^{b}.
\end{align*}

As the potential $V$ is increasing near the event horizon, the first term above is non-negative.  Expanding the second term with \eqref{Ndeformation}, we find
\begin{align*}
T_{ab}[\Psi]\nabla^{a}Y^{b} &= \frac{\kappa}{2}T_{ab}[\Psi]Y^{a}Y^{b} + \frac{\sigma}{4}T_{ab}[\Psi]T^{a}T^{b} + \frac{\sigma}{2}T_{ab}[\Psi]T^{a}Y^{b}  +T_{ab}[\Psi]e_{A}^{a}e_{B}^{b}h^{AB}\\
&\geq \frac{\kappa}{2}|\slashed{\nabla}_{Y}\Psi|^2 + \frac{\sigma}{4}|\slashed{\nabla}_{T}\Psi|^2 + \frac{\sigma}{2}(|\slashed{\nabla}\Psi|^2 + V|\Psi|^2) \\
&- c(M,\Lambda)\left(\slashed{\nabla}_{T}\Psi \cdot \slashed{\nabla}_{Y}\Psi + |\slashed{\nabla}\Psi|^2 + V|\Psi|^2\right).
\end{align*}

With positive surface gravity $\kappa$ and a choice of large $\sigma$, we deduce 
\begin{align}
\begin{split}
T_{ab}[\Psi]\nabla^{a}Y^{b} &\geq c(M,\Lambda)\left(|\slashed{\nabla}_{Y}\Psi|^2 + |\slashed{\nabla}_{T}\Psi|^2 + |\slashed{\nabla}\Psi|^2 + V|\Psi|^2\right) \\
&\geq c(M,\Lambda)T_{ab}[\Psi](T+Y)^{a}(T+Y)^{b}.
\end{split}
\end{align}

Together with the positivity of the first density term, we have the estimate
\begin{equation}
K^{T + Y}[\Psi] = K^{Y}[\Psi] \geq c(M,\Lambda) T_{ab}[\Psi](T+Y)^{a}(T+Y)^{b}
\end{equation}
on the event horizon $\mathcal{H}^{+}.$  A similar argument can be made on the cosmological horizon $\overline{\mathcal{H}}^{+}$ using the transversal field $\overline{Y}$ and $T$.

Extending to the static region, we construct a strictly timelike red-shift multiplier, identically $N = T + Y$ on $\mathcal{H}^{+}$ and $N = T + \overline{Y}$ on $\overline{\mathcal{H}}^{+}$, satisfying the estimates
\begin{align}\label{Nestimates}
\begin{split}
K^{N}[\Psi] &\geq{c(M,\Lambda) J^{N}_{a}[\Psi]N^{a}}\hspace{6mm} \textup{for}\ r_{b}\leq r \leq r_1\ \textup{and}\ R_1 \leq r \leq r_{c},\\
J_{a}^{N}[\Psi]T^{a} &\sim J_{a}^{T}[\Psi]T^{a}\hspace{20mm} \textup{for}\ r_1 \leq r \leq r_2\ \textup{and}\ R_2 \leq r \leq R_1,\\
|K^{N}[\Psi]| &\leq  C(M,\Lambda)|J_{a}^{T}[\Psi]T^{a}|\hspace{3.5mm} \textup{for}\ r_1 \leq r \leq r_2\ \textup{and}\ R_2 \leq r \leq R_1,\\
N &= T\hspace{32mm} \textup{for}\ r_2 \leq r \leq R_2,
\end{split}
\end{align}
for radii $r_{b} < r_1 < r_2 < R_2 < R_1 < r_{c}$.  We choose the radii such that
\[ -\infty < (r_1)_* < (r_2)_* < 0 <  (R_2)_* < (R_1)_* < \infty,\]
expressed in the Regge-Wheeler coordinate.  That is, the radii are well separated from one another, and moreover, the red-shift vector $N$ is identically $T$ in a region about the photon sphere $(3M)_* = 0$.

As an immediate application, we prove uniform boundedness of the non-degenerate $N$-energy for solutions to the Fackerell-Ipser equation.  Here, the $N$-energy is defined by
\begin{equation}
E^{N}_{\Psi}(\tau) := \int_{\Sigma_{\tau}} J^{N}_{a}[\Psi]\eta^{a}.
\end{equation}

\begin{theorem}\label{FIboundedness}
Suppose $\Psi$ is a solution of \eqref{FIeqn}, specified by smooth initial data on the hypersurface $\Sigma_{0}$.   Then for $\tau > \tau' \geq 0$, $\Psi$ satisfies the uniform energy estimate
\begin{equation}
E^{N}_{\Psi}(\tau) \leq C(M,\Lambda)E^{N}_{\Psi}(\tau').
\end{equation}
\end{theorem}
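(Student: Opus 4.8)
The plan is to run the energy method with the red-shift multiplier $N$ in the divergence identity \eqref{stokes} over $\mathcal{R}(\tau',\tau)$, and to show that the only obstruction to monotonicity of the non-degenerate energy $E^{N}_{\Psi}$ is a bulk term supported in two compact $r$-intervals that are separated both from the horizons and from the photon sphere, so that it can be absorbed cheaply.

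First I would assemble the $N$-energy identity. Applying \eqref{stokes} with $X=N$ over $\mathcal{R}(\tau',\tau)$ --- keeping track of the Lorentzian signs so that the coercive near-horizon density $K^{N}[\Psi]\geq c(M,\Lambda)J^{N}_{a}[\Psi]N^{a}\geq 0$ on $\{r_{b}\leq r\leq r_{1}\}\cup\{R_{1}\leq r\leq r_{c}\}$ (from \eqref{Nestimates}) enters favorably --- and discarding that near-horizon bulk together with the horizon fluxes (non-negative by a limiting case of \eqref{energyCondition}), and using $K^{N}[\Psi]=K^{T}[\Psi]\equiv 0$ on $\{r_{2}\leq r\leq R_{2}\}$ where $N\equiv T$, one is left with
\[
E^{N}_{\Psi}(\tau)\ \leq\ E^{N}_{\Psi}(\tau')\ +\ C(M,\Lambda)\int_{\mathcal{R}(\tau',\tau)\cap\left(\{r_{1}\leq r\leq r_{2}\}\cup\{R_{2}\leq r\leq R_{1}\}\right)}\left|J^{T}_{a}[\Psi]T^{a}\right|,
\]
by the bound $|K^{N}[\Psi]|\leq C(M,\Lambda)|J^{T}_{a}[\Psi]T^{a}|$ of \eqref{Nestimates} on those two intervals.

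It remains to control the error integral, and this is the heart of the matter. On the two $r$-intervals in question $T$ is uniformly timelike, so $|J^{T}_{a}[\Psi]T^{a}|$ is uniformly comparable to the full energy density $|\slashed{\nabla}_{L}\Psi|^{2}+|\slashed{\nabla}_{\underline{L}}\Psi|^{2}+|\slashed{\nabla}\Psi|^{2}+V|\Psi|^{2}$, hence to the flux density $J^{T}_{a}[\Psi]\eta^{a}$ of $\Sigma_{s}$ restricted to those radii; the co-area formula and \eqref{Testimates} then bound the error by a constant times $\int_{\tau'}^{\tau}E^{T}_{\Psi}(s)\,ds\leq (\tau-\tau')\,E^{T}_{\Psi}(\tau')\leq (\tau-\tau')\,E^{N}_{\Psi}(\tau')$. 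This alone gives only the non-uniform bound $E^{N}_{\Psi}(\tau)\leq (1+C(M,\Lambda)(\tau-\tau'))\,E^{N}_{\Psi}(\tau')$, so the main obstacle is removing the growth: one must instead control the error by an integrated local energy estimate $\int_{\mathcal{R}(\tau',\tau)\cap(\{r_{1}\leq r\leq r_{2}\}\cup\{R_{2}\leq r\leq R_{1}\})}(\text{energy density})\leq C(M,\Lambda)\,E^{T}_{\Psi}(\tau')$. Crucially, because these radii avoid the photon sphere $r=3M$, such an estimate is produced by a standard Morawetz multiplier (of the form $f(r)\partial_{r_{*}}$) applied to \eqref{FIeqn}, using only the conserved $T$-energy and without confronting the loss of derivatives forced by trapping; alternatively one can run the Dafermos--Rodnianski iteration scheme \cite{DR,DRdS}. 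Everything else --- the $N$-energy identity, the sign bookkeeping in \eqref{Nestimates}, and the density equivalences away from the horizons --- is routine.
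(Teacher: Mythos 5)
Your proposal is sound in outline, but it takes a genuinely different--and heavier--route than the paper. The paper's proof (following \cite{DRClay}) never discards the coercive near-horizon bulk: keeping $\int K^{N}[\Psi]\geq c\int J^{N}_{a}[\Psi]N^{a}$ over $\{r_{b}\leq r\leq r_{1}\}\cup\{R_{1}\leq r\leq r_{c}\}$ on the left, and noting that on the complementary region $J^{N}_{a}[\Psi]\eta^{a}\sim J^{T}_{a}[\Psi]\eta^{a}$ (by \eqref{Nestimates}, with $N=T$ near the photon sphere), it converts the identity into the integral inequality $E^{N}_{\Psi}(\tau)+c\int_{\tilde\tau}^{\tau}E^{N}_{\Psi}(s)\,ds\leq C\,E^{T}_{\Psi}(\tau')(\tau-\tilde\tau)+E^{N}_{\Psi}(\tilde\tau)$, valid for all $\tau'\leq\tilde\tau\leq\tau$; the linear growth you correctly identify as the obstruction is then absorbed by the integral term via the standard mean-value/pigeonhole argument, with no Morawetz current and no contact with trapping at all. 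Your route instead throws the good bulk away and compensates with a localized integrated-energy estimate on $\{r_{1}\leq r\leq r_{2}\}\cup\{R_{2}\leq r\leq R_{1}\}$ bounded by $E^{T}_{\Psi}(\tau')$ alone. Such an estimate is indeed available without derivative loss since these annuli avoid $r=3M$ (it is essentially the $X_{1}$-current of \eqref{bulk} restricted away from the photon sphere, with boundary terms controlled by the $T$-energy as in \eqref{boundaryPart}, so there is no circularity with Theorem \ref{FIdecay}), but be aware that it is not quite a one-line ``standard multiplier'' step: the error $|K^{N}[\Psi]|\lesssim|J^{T}_{a}[\Psi]T^{a}|$ contains $|\slashed{\nabla}_{T}\Psi|^{2}$, which a pure $f(r)\partial_{r_{*}}$ current does not produce; you need in addition an $X_{2}$-type modification or a cutoff Lagrangian identity (multiplying \eqref{FIeqn} by $\chi(r)\Psi$ and integrating by parts) to trade $t$-derivatives for the spatial and zeroth-order terms the Morawetz bulk does control. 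So your argument can be completed, and it has the virtue of making the mechanism (no trapping in the error region) transparent, but the paper's absorption argument is more elementary, proves boundedness prior to and independently of any integrated local energy decay, and is the intended logical order here; your parenthetical ``alternatively run the Dafermos--Rodnianski scheme'' is in fact exactly the paper's proof rather than an alternative.
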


\begin{proof}
The proof proceeds just as in \cite{DRClay}.  Given $\tau' \leq \tilde{\tau} \leq \tau$, integration over the spacetime region $\mathcal{R}(\tilde{\tau}, \tau)$ yields
 \begin{align*}
&E^{N}_{\Psi}(\tau) + \int_{\mathcal{R}(\tilde{\tau}, \tau)\cap{\left(\{r_b \leq r \leq r_1\}\cup\{R_1\leq r\leq r_c\}\right)}}K^N[\Psi]\\
&\leq E^{N}_{\Psi}(\tilde{\tau})+\int_{\mathcal{R}(\tilde{\tau}, \tau)\cap{\left(\{r_1 \leq r \leq r_2\}\cup\{R_2\leq r\leq R_1\}\right)}}|K^N[\Psi]|,
\end{align*}
as the horizon terms have good sign.

Utilizing monotonicity of the $T$-energy \eqref{Testimates} and the red-shift estimates \eqref{Nestimates}, we deduce the integral inequality
\[E_{\Psi}^{N}(\tau) + c(M,\Lambda)\int_{\tilde{\tau}}^{\tau} E_{\Psi}^{N}(s) ds \leq C(M,\Lambda)E_{\Psi}^{T}(\tau')(\tau - \tilde{\tau}) + E_{\Psi}^{N}(\tilde{\tau}),\]
which implies the uniform bound
\[E_{\Psi}^{N}(\tau) \leq C(M,\Lambda) E_{\Psi}^{N}(\tau').\]
\end{proof}

\subsection{The Morawetz Multiplier $X$}
Let $X = f(r)\partial_{r_{*}}$, with $f$ a radial function, and let $\omega^{X}$ be a scalar weight function.  Using the notation $(\hspace{2mm})'$ to denote differentiation by the Regge-Wheeler coordinate $r_{*}$, we calculate the unweighted density to be
\begin{align}
\begin{split}
K^{X}[\Psi]  &= f'|\slashed\nabla_{r_{*}}\Psi|^2 + \frac{f}{r}\left(1-\frac{3M}{r}\right)|\slashed\nabla\Psi|^2 -\frac{1}{2}\left(f' + \frac{2}{r}(1-\mu)f\right)\slashed{\nabla}^{a}\Psi\cdot\slashed{\nabla}_{a}\Psi\\
&+\left(\frac{1}{2} Vf\mu_{r} -\frac{1}{2}fV' -\frac{1}{2}\left(f' + \frac{2}{r}(1-\mu)f\right)V\right)|\Psi|^2\\
&= f'|\slashed\nabla_{r_{*}}\Psi|^2 + \frac{f}{r}\left(1-\frac{3M}{r}\right)|\slashed\nabla\Psi|^2 -\frac{1}{4}\left(f' + \frac{2}{r}(1-\mu)f\right)\Box |\Psi|^2\\
& +\left(\frac{1}{2} Vf\mu_{r} -\frac{1}{2}fV'\right)|\Psi|^2,
\end{split}
\end{align}
where we have used the identity
\begin{equation}\label{productRule}
\Box |\Psi|^2 = 2V|\Psi|^2 + 2\slashed{\nabla}^{a}\Psi\cdot\slashed\nabla_{a}\Psi.
\end{equation}

Inserting the weight function
\begin{equation}
\omega^{X} := f' + \frac{2}{r}(1-\mu)f,
\end{equation}
we calculate the weighted density \eqref{Kweight}
\begin{align}\label{weightedDensity}
\begin{split}
K^{X,\omega^{X}}[\Psi] &= f'|\slashed\nabla_{r_{*}}\Psi|^2 + \frac{f}{r}\left(1 - \frac{3M}{r}\right)|\slashed\nabla \Psi|^2\\
& + \left(\frac{1}{2}Vf\mu_{r} - \frac{1}{2}fV' - \frac{1}{4}\Box\omega^{X}\right)|\Psi|^2.
\end{split}
\end{align}

Through the application of suitable multipliers of the form above, we deduce the following non-degenerate integrated decay estimate:

\begin{theorem}\label{FIdecay}
Suppose $\Psi$ is a solution of \eqref{FIeqn}, specified by smooth initial data on the hypersurface $\Sigma_{0}$.   Then for $\tau \geq \tau' \geq 0$, $\Psi$ satisfies the non-degenerate integrated decay estimate
\begin{equation}
\int_{\tau'}^{\tau} E^{N}_{\Psi}(s)ds \leq C(M,\Lambda)\left(E^{N}_{\Psi}(\tau') + E^{N}_{\Omega\Psi}(\tau')\right).
\end{equation}
\end{theorem}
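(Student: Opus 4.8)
The plan is to control $\Psi$ by combining a Morawetz current built from a radial multiplier $X = f(r)\partial_{r_{*}}$, the red-shift current $J^{N}$, and the Killing current $J^{T}$, so as to obtain a spacetime estimate coercive everywhere except at the photon sphere $r = 3M$, and then to remove the degeneracy at $r = 3M$ --- forced by trapping, and responsible for the term $E^{N}_{\Omega\Psi}(\tau')$ on the right --- by commuting the Fackerell-Ipser equation \eqref{FIeqn} with the angular Killing fields $\Omega$. First I would fix a bounded radial function $f$ on $(r_{b},r_{c})$ with $f' \geq 0$, $f < 0$ on $(r_{b},3M)$, $f(3M) = 0$, $f > 0$ on $(3M,r_{c})$, with $f'$ bounded below on compact subsets of $(r_{b},r_{c})$, and with $f$ tending to a constant and $f'$ vanishing at each horizon. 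With the weight $\omega^{X} = f' + \frac{2}{r}(1-\mu)f$ as above, the principal part $f'|\slashed{\nabla}_{r_{*}}\Psi|^{2} + \frac{f}{r}\big(1 - \frac{3M}{r}\big)|\slashed{\nabla}\Psi|^{2}$ of the weighted density \eqref{weightedDensity} is non-negative and degenerates only in its angular part, and only at $r = 3M$. The zeroth-order coefficient $\frac{1}{2}Vf\mu_{r} - \frac{1}{2}fV' - \frac{1}{4}\Box\omega^{X}$ I would arrange, following \cite{DRdS}, to be bounded below by a non-negative weight vanishing only at the horizons --- a delicate point resting on a careful global choice of $f$ and on prescribing its low-order jet at $r = 3M$ (noting $V'(3M) = 0$, so that the contribution $-\frac{1}{2}fV'$ is of order $(r-3M)^{2}$ there) --- together with the Poincar\'{e} inequality \eqref{PoincarePsi2} and a Hardy-type inequality where needed to retain a non-degenerate $|\Psi|^{2}$ term. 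Applying Stokes' theorem \eqref{stokes} to $J^{X,\omega^{X}}$ over $\mathcal{R}(\tau',\tau)$, the horizon fluxes have a favorable sign from the choice of $f$, so the integrated weighted density is bounded by the fluxes on $\Sigma_{\tau}$ and $\Sigma_{\tau'}$ --- each controlled by the $N$-energy of the respective slice --- hence by $E^{N}_{\Psi}(\tau')$ via Theorem \ref{FIboundedness}.

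Next I would add large multiples of $J^{N}$ and $J^{T}$ to $J^{X,\omega^{X}}$. By the red-shift estimates \eqref{Nestimates}, in the regions adjacent to the horizons the bulk term is coercive, $K^{N}[\Psi] \geq c(M,\Lambda)J^{N}_{a}[\Psi]N^{a}$, while in the transition regions $r_{1} \leq r \leq r_{2}$ and $R_{2} \leq r \leq R_{1}$ the indefinite part of $K^{N}[\Psi]$ is bounded by $C(M,\Lambda)|J^{T}_{a}[\Psi]T^{a}|$ and so absorbed into the Morawetz density, which is non-degenerate there since those radii avoid $r = 3M$. Together with Theorem \ref{FIboundedness}, this produces a spacetime estimate bounding, over all of $\mathcal{R}(\tau',\tau)$, a positive combination of $|\slashed{\nabla}_{r_{*}}\Psi|^{2}$, $|\slashed{\nabla}\Psi|^{2}$ (with a quadratic degeneracy at $r = 3M$) and $|\Psi|^{2}$, together with the full non-degenerate $J^{N}_{a}[\Psi]N^{a}$ near the horizons, all by $C(M,\Lambda)E^{N}_{\Psi}(\tau')$.

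To remove the photon-sphere degeneracy I would commute: since the $\Omega_{i}$ are Killing and $V$ is radial, each $\Omega_{i}\Psi$ again solves \eqref{FIeqn}, so the previous step applies verbatim to $\Omega\Psi$ with right-hand side $C(M,\Lambda)E^{N}_{\Omega\Psi}(\tau')$. On a fixed compact subset of the static spacetime containing the photon sphere and avoiding both horizons, the non-degenerate $|\Omega\Psi|^{2}$ bound from the $\Omega$-commuted estimate combines with the on-sphere elliptic inequality $\int_{S^{2}(t,r)}r^{2}|\slashed{\nabla}\Psi|^{2} \leq C(M,\Lambda)\int_{S^{2}(t,r)}\big(|\Omega\Psi|^{2} + |\Psi|^{2}\big)$ (a consequence of the $\Omega_{i}$ generating the rotations) to upgrade $|\slashed{\nabla}\Psi|^{2}$ there to non-degenerate control. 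It remains to recover $|\slashed{\nabla}_{T}\Psi|^{2}$ on that subset: testing \eqref{FIeqn} against $\chi(r)\Psi$ for a cutoff $\chi$ supported there, integrating by parts over $\mathcal{R}(\tau',\tau)$, and using $\slashed{\nabla}^{c}\Psi\cdot\slashed{\nabla}_{c}\Psi = -(1-\mu)^{-1}|\slashed{\nabla}_{T}\Psi|^{2} + (1-\mu)^{-1}|\slashed{\nabla}_{r_{*}}\Psi|^{2} + |\slashed{\nabla}\Psi|^{2}$, one expresses $\int_{\mathcal{R}(\tau',\tau)}\chi(1-\mu)^{-1}|\slashed{\nabla}_{T}\Psi|^{2}$ through the already-controlled integrals of $|\slashed{\nabla}_{r_{*}}\Psi|^{2}$, $|\slashed{\nabla}\Psi|^{2}$ and $V|\Psi|^{2}$ (with coefficients bounded since $\chi$ avoids the horizons) plus boundary integrals on $\Sigma_{\tau}$ and $\Sigma_{\tau'}$ only --- the support of $\chi$ meeting neither horizon --- which are $\leq C(M,\Lambda)E^{N}_{\Psi}(\tau')$ by Theorem \ref{FIboundedness}. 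Assembling the interior estimate, the near-horizon red-shift estimate, the inequality $|\slashed{\nabla}_{L}\Psi|^{2} + |\slashed{\nabla}_{\underline{L}}\Psi|^{2} \leq C(M,\Lambda)\big(|\slashed{\nabla}_{T}\Psi|^{2} + |\slashed{\nabla}_{r_{*}}\Psi|^{2}\big)$, and the coarea formula, which bounds $\int_{\tau'}^{\tau}E^{N}_{\Psi}(s)\,ds$ by the resulting non-degenerate spacetime integral, yields the stated estimate.

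The main obstacle is the construction of the Morawetz multiplier: arranging that the zeroth-order and $\Box\omega^{X}$ terms of \eqref{weightedDensity} do not spoil the coercivity near $r = 3M$ demands a careful choice of the low-order behavior of $f$ there, and this is entangled with the trapping at the photon sphere itself, which is precisely what forces the loss of one angular derivative on the right-hand side and must be compensated by the commutation and on-sphere elliptic estimates described above.
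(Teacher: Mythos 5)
Your proposal is correct in outline and follows the same architecture as the paper's proof: a Morawetz current $J^{X,\omega^{X}}$ with a radial multiplier vanishing at $r=3M$ (the paper makes the explicit choice $f_1 = (1-\tfrac{3M}{r})(1+\tfrac{\mu}{2})^2$, giving the degenerate bulk estimate \eqref{bulk}), the red-shift estimates \eqref{Nestimates} to upgrade to the non-degenerate $N$-density near the horizons, boundary and horizon fluxes controlled through the $T$-energy monotonicity \eqref{Testimates} and Theorem \ref{FIboundedness}, and commutation with the angular Killing fields to compensate the trapping degeneracy, which is exactly what produces the $E^{N}_{\Omega\Psi}(\tau')$ term (the paper inserts $K^{X_1,\omega^{X_1}}[\Omega\Psi]$ into \eqref{improvedbulkOne}--\eqref{improvedbulkTwo}). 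The one genuinely different sub-step is your recovery of the $t$-derivatives: you test \eqref{FIeqn} against $\chi(r)\Psi$ and use the resulting Lagrangian identity on a compact region containing the photon sphere, whereas the paper follows \cite{DRdS} and modifies the current with the additional multiplier $X_2 = r^2\partial_{r_*}$; both are standard and your version has the mild advantage of keeping all boundary terms away from the horizons, at the cost of needing the commuted angular control in place first (which you correctly arrange). One caveat: your claim that the horizon fluxes of $J^{X,\omega^{X}}$ "have a favorable sign from the choice of $f$" is doubtful --- at $\mathcal{H}^{+}$ the flux of $J^{X}$ is essentially $\tfrac{f}{2}|\slashed{\nabla}_{L}\Psi|^2$, which with $f<0$ there is non-positive, and the weight terms are not sign-definite either --- but this does not damage the argument, since these fluxes are pointwise comparable to the degenerate $T$-flux along the horizons and hence bounded by $E^{T}_{\Psi}(\tau')$, which is precisely how the paper handles them in \eqref{boundaryPart}.
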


\begin{proof}
The multiplier $X_1 = f_1(r)\partial_{r_{*}}$, with
\begin{equation}
f_1(r) = \left(1-\frac{3M}{r}\right)\left(1+\frac{\mu}{2}\right)^2,
\end{equation}
provides the primary density estimate
\begin{multline}\label{bulk}
\int_{\mathcal{R}(\tau',\tau)\cap{\{r_1 \leq r \leq R_1\}}} \left[|\Psi|^2 + |\slashed\nabla_{r_{*}}\Psi|^2 + (r-3M)^2|\slashed{\nabla}\Psi|^2\right] \\
\leq C(M,\Lambda)\int_{\mathcal{R}(\tau',\tau)} K^{X_1,\omega^{X_1}}[\Psi],
\end{multline}
giving coercive control away from the horizons and away from the photon sphere $r = 3M$.

With standard modifications by the multiplier $X_2 = r^2\partial_{r_{*}}$ \cite{DRdS}, allowing for control of $t$-derivatives, and the red-shift multiplier $N$, allowing for control near the horizons, we deduce the density estimates
\begin{multline}\label{improvedbulkOne}
\int_{\tau'}^{\tau}\left(\int_{\Sigma_{s}\cap{\{r_1\leq r\leq R_1\}}} J^{T}_{a}[\Psi]\eta^{a}\right)ds\\
\leq C(M,\Lambda)\int_{\mathcal{R}(\tau',\tau)}\Big[K^{X_1,\omega^{X_1}}[\Psi] + K^{X_1,\omega^{X_1}}[\Omega\Psi] + K^{c_1(M,\Lambda)X_2}[\Psi]\Big],
\end{multline}
\begin{align}\label{improvedbulkTwo}
\begin{split}
\int_{\tau'}^{\tau}E^{N}_{\Psi}(s)ds \leq C(M,\Lambda)&\int_{\mathcal{R}(\tau',\tau)}\Big[K^{X_1,\omega^{X_1}}[\Psi] + K^{X_1,\omega^{X_1}}[\Omega\Psi]\\
& + K^{c_1(M,\Lambda)X_2}[\Psi] + K^{c_2(M,\Lambda)N}[\Psi]\Big],
\end{split}
\end{align}
where $c_1(M,\Lambda)$ and $c_2(M,\Lambda)$ are suitably chosen positive constants.

Turning to the boundary terms, we note that those terms formed from the weighted $X_1$-energy are bounded by the $T$-energy:
\begin{align}\label{boundaryPart}
\begin{split}
&\Big{|}\int_{\Sigma_{\tau}} J^{X_1,\omega^{X_1}}_{a}[\Psi]\eta^{a}\Big{|} \leq C(M,\Lambda) E^{T}_{\Psi}(\tau) \leq C(M,\Lambda)E^{T}_{\Psi}(\tau'),\\
&\Big{|}\int_{\mathcal{H}^{+}(\tau',\tau)} J^{X_1,\omega^{X_1}}_{a}[\Psi]\eta^{a}\Big{|} \leq C(M,\Lambda) \int_{\mathcal{H}^{+}(\tau',\tau)} J^{T}_{a}[\Psi]\eta^{a} \leq C(M,\Lambda)E_{\Psi}^{T}(\tau'),\\
&\Big{|}\int_{\overline{\mathcal{H}}^{+}(\tau',\tau)} J^{X_1,\omega^{X_1}}_{a}[\Psi]\eta^{a}\Big{|} \leq C(M,\Lambda)\int_{\overline{\mathcal{H}}^{+}(\tau',\tau)} J^{T}_{a}[\Psi]\eta^{a} \leq C(M,\Lambda)E_{\Psi}^{T}(\tau'),
\end{split}
\end{align}
where we have also utilized \eqref{Testimates}.  Similar estimates hold for the unweighted $X_2$-energy.  Using these boundary estimates and the degenerate density estimate \eqref{improvedbulkOne}, we can estimate the horizon terms formed from the red-shift:
\begin{align}
\begin{split}
&\int_{\mathcal{H}^{+}(\tau',\tau)} J^{N}_{a}[\Psi]\eta^{a} \leq C(M,\Lambda)\left(E^{N}_{\Psi}(\tau') + E^{N}_{\Omega\Psi}(\tau')\right),\\ 
&\int_{\overline{\mathcal{H}}^{+}(\tau',\tau)} J^{N}_{a}[\Psi]\eta^{a} \leq C(M,\Lambda)\left(E^{N}_{\Psi}(\tau') + E^{N}_{\Omega\Psi}(\tau')\right).
\end{split}
\end{align}
Taken together with Theorem \ref{FIboundedness}, these boundary estimates and the non-degenerate density estimate \eqref{improvedbulkTwo} lead to an integrated decay estimate for the $N$-energy:
\begin{equation}
\int_{\tau'}^{\tau} E^{N}_{\Psi}(s)ds \leq C(M,\Lambda)\left(E^{N}_{\Psi}(\tau') + E^{N}_{\Omega\Psi}(\tau')\right).
\end{equation}
\end{proof}
\subsection{Decay Estimates}

Using Theorems \ref{FIboundedness} and \ref{FIdecay}, we conclude this section with decay estimates on the solution $\Psi$.

\begin{theorem}\label{degExpDecay}
Suppose $\Psi_{\ell}$ is a solution of \eqref{FIeqn}, specified by smooth initial data on the hypersurface $\Sigma_{0}$ and supported at the harmonic $\ell \geq 1$.  Then for $\tau \geq 0$, $\Psi_{\ell}$ satisfies the energy decay estimate
\begin{equation}
E^{N}_{\Psi_{\ell}}(\tau) \leq C(M,\Lambda)E^{N}_{\Psi_{\ell}}(0)\exp(-c(M,\Lambda)\tau/\ell^2).
\end{equation}
\end{theorem}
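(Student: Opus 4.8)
The plan is to turn the loss of one angular derivative in Theorem~\ref{FIdecay} into a harmless factor of $\ell^2$ when the data is supported at the harmonic $\ell$, and then to feed the resulting integrated decay estimate into an elementary quasi-monotonicity argument. First I would observe that the Fackerell--Ipser operator $\slashed{\Box}_{\mathcal{L}(-1)} - V$ commutes with the angular Killing derivatives, since $V$ is radial and the operator is built from the rotationally invariant metric; consequently each $\Omega_i\Psi_\ell$ is again a solution of \eqref{FIeqn}, and it is again supported at the harmonic $\ell$. I would likewise note that the $\Omega_i$ commute with $\slashed\nabla_L$, $\slashed\nabla_{\underline{L}}$ and with the red-shift frame vectors $Y,\overline{Y}$ (these having been chosen invariant under the $\Omega_i$), so that the frame derivatives appearing in the $N$-energy of $\Omega_i\Psi_\ell$ are themselves supported at the harmonic $\ell$.

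Next I would record two consequences of harmonic support. From the identity $\slashed\Delta|\omega|^2 = 2\slashed\Delta_{\mathcal{L}(-1)}\omega\cdot\omega + 2|\slashed\nabla\omega|^2$ together with the eigenvalue $\lambda_{m,\ell} = (1-\ell(\ell+1))/r^2$ of $\slashed\Delta_{\mathcal{L}(-1)}$, the Poincar\'e inequality \eqref{PoincarePsi2} sharpens to the identity
\[
\int_{S^2(t,r)}|\slashed\nabla\Psi_\ell|^2 = \frac{\ell(\ell+1)-1}{r^2}\int_{S^2(t,r)}|\Psi_\ell|^2,
\]
while the Casimir relation for the $SO(3)$ action gives $\sum_{i}\int_{S^2(t,r)}|\Omega_i\Psi_\ell|^2 = \ell(\ell+1)\int_{S^2(t,r)}|\Psi_\ell|^2$, with the analogous identities holding for the $L$-, $\underline{L}$-, $Y$- and $\overline{Y}$-derivatives of $\Psi_\ell$ in place of $\Psi_\ell$. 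Applying these termwise to the energy density in $J^N_a[\Omega_i\Psi_\ell]$ — using in particular that $|\slashed\nabla\Omega_i\Psi_\ell|^2$ and $|\slashed\nabla_L\Omega_i\Psi_\ell|^2$ are $\ell$-supported and hence comparable to $\ell^2|\slashed\nabla\Psi_\ell|^2$ and $\ell^2|\slashed\nabla_L\Psi_\ell|^2$ respectively — I obtain
\[
E^N_{\Omega\Psi_\ell}(\tau) \leq C(M,\Lambda)\,\ell^2\, E^N_{\Psi_\ell}(\tau)
\]
for all $\tau\geq 0$.

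Finally, inserting this bound into Theorem~\ref{FIdecay} gives $\int_{\tau'}^{\tau}E^N_{\Psi_\ell}(s)\,ds \leq C(M,\Lambda)\,\ell^2\,E^N_{\Psi_\ell}(\tau')$, and I would combine this with the quasi-monotonicity $E^N_{\Psi_\ell}(\tau)\leq C(M,\Lambda)E^N_{\Psi_\ell}(s)$ for $\tau\geq s$ from Theorem~\ref{FIboundedness}. Choosing a time step $T_0 = C(M,\Lambda)\ell^2$ large enough, the mean-value bound $T_0\inf_{[\tau',\tau'+T_0]}E^N_{\Psi_\ell} \leq C(M,\Lambda)\ell^2 E^N_{\Psi_\ell}(\tau')$ combined with quasi-monotonicity forces $E^N_{\Psi_\ell}(\tau'+T_0)\leq \tfrac12 E^N_{\Psi_\ell}(\tau')$; iterating over consecutive intervals of length $T_0$ and using quasi-monotonicity on the final partial interval yields $E^N_{\Psi_\ell}(\tau)\leq C(M,\Lambda)E^N_{\Psi_\ell}(0)\exp(-c(M,\Lambda)\tau/\ell^2)$.

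\textbf{Main obstacle.} The delicate point is the second step: one must check that passing to an angular Killing derivative costs only a factor $\ell^2$ \emph{at the level of the full $N$-energy}, i.e. that the genuinely second-order angular terms $|\slashed\nabla\Omega_i\Psi_\ell|^2$, as well as the mixed terms $|\slashed\nabla_L\Omega_i\Psi_\ell|^2$, $|\slashed\nabla_{\underline{L}}\Omega_i\Psi_\ell|^2$ and the red-shift contributions near the horizons, are all supported at the harmonic $\ell$ and therefore controlled by the two identities above. This is where the commutation relations of the first step and the precise eigenvalue bookkeeping are essential; by contrast, the various $r$-weights in $J^N_a$, the region-dependent choice of $N$ in \eqref{Nestimates}, and the concluding dyadic iteration contribute only constants depending on $M$ and $\Lambda$.
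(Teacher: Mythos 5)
Your proposal is correct and follows essentially the same route as the paper: the paper likewise combines the uniform bound of Theorem~\ref{FIboundedness} with the integrated decay of Theorem~\ref{FIdecay}, absorbs the commuted term by asserting that the constant in the integrated estimate grows like $1+\ell^2$ on a fixed harmonic, and then runs the same mean-value/dyadic iteration with time step proportional to $\ell^2$. The only difference is that you spell out, via the eigenvalue identity and the Casimir relation, the bound $E^N_{\Omega\Psi_\ell}\leq C(M,\Lambda)\ell^2 E^N_{\Psi_\ell}$ that the paper simply asserts as the quadratic $\ell$-dependence of $C_2(M,\Lambda,\ell)$.
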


\begin{proof}
Letting $f(\tau) = E^{N}_{\Psi_{\ell}}(\tau)$,  Theorems \ref{FIboundedness} and \ref{FIdecay} imply
\begin{align*}
f(\tau) &\leq C_1(M,\Lambda) f(\tau'),\\
\int_{\tau'}^{\tau} f(s)ds &\leq C_2(M,\Lambda, \ell)f(\tau'),
\end{align*}
where $C_2(M,\Lambda,\ell) = C_3(M,\Lambda)(1+\ell^2)$ has quadratic dependence on $\ell$.  Subsequently, we suppress dependence of the $C_{i}$ on the parameters $M, \Lambda,$ and $\ell$.

Letting $\lambda > 0$ be a positive parameter and integrating over $[\tau, \tau + \lambda C_2],$ the mean value theorem yields a $\tilde{\tau}$ on the interval such that
\[ \lambda C_2 f(\tilde\tau) = \int_{\tau}^{\tau + \lambda C_2} f(s)ds \leq C_2 f(\tau),\]
where we have used the integral estimate.  Applying as well the pointwise estimate, we find
\[\frac{\lambda C_2}{C_1} f(\tau + \lambda C_2) \leq \lambda C_2 f(\tilde\tau) \leq C_2 f(\tau).\]
With the choice of parameter $\lambda = 2C_1$, we have
\[ f(\tau + 2C_1C_2) \leq \frac{1}{2}f(\tau).\]
That is, given any initial point $\tau_0 = \tau \geq 0$, we can produce a sequence $\tau_{k} = \tau_0 + 2kC_1C_2$ exhibiting the exponential decay
\[ f(\tau_{k}) \leq \frac{1}{2^k}f(\tau_0),\]
with decay parameter inversely proportional to the product $C_1C_2$.  In light of the estimates above, this sequential result is easily extended to arbitrary $\tau$, establishing the theorem.

\end{proof}

The degenerate exponential decay estimate above easily leads to non-degenerate super-polynomial decay of general solutions $\Psi$:
\begin{theorem}\label{PolyDecay}
Suppose $\Psi$ is a solution of \eqref{FIeqn}, specified by smooth initial data on the hypersurface $\Sigma_{0}$.  Then for $\tau \geq 0$, $\Psi$ satisfies the energy decay estimate
\begin{equation}
E^{N}_{\Psi}(\tau) \leq C(M,\Lambda,m)\left(\sum_{(q) \leq m} E^{N}_{\Omega^{(q)}\Psi}(0)\right)(1+\tau)^{-m}.
\end{equation}
\end{theorem}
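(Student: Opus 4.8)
The plan is to decompose $\Psi$ into spherical harmonics and apply the per-harmonic exponential decay of Theorem \ref{degExpDecay}, trading the $\ell$-dependent exponential rate for polynomial decay at the cost of powers of $\ell$, which are then absorbed into $N$-energies of angular derivatives.

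First I would write $\Psi = \sum_{\ell \geq 1}\Psi_{\ell}$, where $\Psi_{\ell}$ is the projection of $\Psi$ onto the $\ell$-eigenspace of $\slashed{\Delta}_{\mathcal{L}(-1)}$ (recall that $\ell \geq 1$ on $\mathcal{L}(-1)$). Since the multiplier $N$ is invariant under the angular Killing fields in $\Omega$, the operators $\slashed{\nabla}_{L}$, $\slashed{\nabla}_{\underline{L}}$, $\slashed{\nabla}_{Y}$ entering the $N$-energy commute with the spherical harmonic projections, and distinct harmonics are $L^2$-orthogonal on each sphere of symmetry. Hence the $N$-energy is additive over harmonics, $E^{N}_{\Psi}(\tau) = \sum_{\ell \geq 1} E^{N}_{\Psi_{\ell}}(\tau)$, and likewise $\sum_{(q) \leq m} E^{N}_{\Omega^{(q)}\Psi}(\tau) = \sum_{\ell \geq 1}\sum_{(q) \leq m} E^{N}_{\Omega^{(q)}\Psi_{\ell}}(\tau)$. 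Each $\Psi_{\ell}$ is again a solution of \eqref{FIeqn} supported at the harmonic $\ell$, so Theorem \ref{degExpDecay} applies and gives $E^{N}_{\Psi_{\ell}}(\tau) \leq C(M,\Lambda)E^{N}_{\Psi_{\ell}}(0)\exp(-c(M,\Lambda)\tau/\ell^2)$.

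Next I would invoke the elementary bound $\exp(-x) \leq (m/e)^m x^{-m}$ for $x > 0$: taking $x = c(M,\Lambda)(1+\tau)/\ell^2$ and absorbing the harmless factor $\exp(c/\ell^2) \leq \exp(c)$ yields $\exp(-c\tau/\ell^2) \leq C(M,\Lambda,m)\,\ell^{2m}(1+\tau)^{-m}$. Summing the resulting bound $E^{N}_{\Psi_{\ell}}(\tau) \leq C(M,\Lambda,m)\,\ell^{2m}E^{N}_{\Psi_{\ell}}(0)(1+\tau)^{-m}$ over $\ell$ produces
\[ E^{N}_{\Psi}(\tau) \leq C(M,\Lambda,m)(1+\tau)^{-m}\sum_{\ell \geq 1}\ell^{2m}E^{N}_{\Psi_{\ell}}(0). \]
It then remains to dominate $\sum_{\ell}\ell^{2m}E^{N}_{\Psi_{\ell}}(0)$ by $\sum_{(q) \leq m} E^{N}_{\Omega^{(q)}\Psi}(0)$. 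Since $\sum_{i}\Omega_{i}^2$ acts as multiplication by $-\ell(\ell+1)$ on scalar $\ell$-harmonics — with the tensorial analogue on $\mathcal{L}(-1)$ holding up to lower-order terms — and the $\Omega_{i}$ commute with $\slashed{\nabla}_{L}$, $\slashed{\nabla}_{\underline{L}}$, $\slashed{\nabla}_{Y}$, one has $\sum_{(q) \leq m} E^{N}_{\Omega^{(q)}\Psi_{\ell}}(0) \geq c(M,\Lambda,m)(1+\ell^2)^m E^{N}_{\Psi_{\ell}}(0) \geq c(M,\Lambda,m)\ell^{2m}E^{N}_{\Psi_{\ell}}(0)$; summing over $\ell$ and using additivity gives the claim, hence the theorem, for smooth data (for which the right-hand side is finite and the harmonic sum converges).

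The main obstacle is not the interpolation step but the bookkeeping of the harmonic decomposition on the co-vector bundle $\mathcal{L}(-1)$: one must verify that the operators defining the $N$-energy genuinely commute with the angular projections — which rests on $N$ being $\Omega$-invariant, so that $[\slashed{\nabla}_{a},\slashed{\nabla}_{b}]$ contracted against $N$ vanishes as noted after \eqref{stressDivergence} — and that the degree-$m$ angular $N$-energy controls $\ell^{2m}$ times the base $N$-energy uniformly in $\ell$, which requires some care with the non-commutativity of the $\Omega_{i}$. Alternatively, one may simply carry a fixed, sufficiently large number of $\Omega$-derivatives through Theorems \ref{FIboundedness} and \ref{FIdecay} from the outset, working with the commuted quantities $\Omega^{(q)}\Psi$, and thereby avoid the explicit harmonic sum altogether.
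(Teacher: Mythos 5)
Your proposal is correct and follows essentially the same route as the paper: per-harmonic application of Theorem \ref{degExpDecay}, the elementary comparison $e^{x} \geq x^{m}/m!$ to trade the $\ell$-degenerate exponential rate for $\ell^{2m}(1+\tau)^{-m}$, and $L^2$-summability of the co-vector harmonics with the $\ell^{2m}$ factor absorbed into the $\Omega$-commuted $N$-energies. The paper states these steps only in a single sentence, so your more detailed bookkeeping (orthogonality of harmonics in the $N$-energy and the Casimir bound on $\mathcal{L}(-1)$) is exactly the content being invoked there.
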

Note that the initial energy involves commutation of $\Psi$ with the angular Killing fields of $\Omega$, these commutations being described by multi-indices $(q)$ of length $m$ or less.

\begin{proof}
Energy decay follows from an application of the comparison
\[ \exp(\tau) \geq \frac{\tau^m}{m!},\]
satisfied for $\tau \geq 0$, to the result of Theorem \ref{degExpDecay}, and from the $L^2$-summability of the co-vector harmonics over the spheres of symmetry in $\Sigma_{\tau}$.
\end{proof}
We remark that pointwise decay follows from further commutation with the angular Killing fields of $\Omega$ and application of standard Sobolev embedding to the resulting energy estimates.

\section{Decay of the Maxwell Components}

Using the estimates on $P_{A}$ and $\underline{P}_{A}$ from the previous section, we conclude this work by proving decay of the components of $F_{ab}$.

\subsection{Decay of $\alpha_{A}$ and $\underline{\alpha}_{A}$}
Basic estimates for $\alpha_{A}$ and $\underline\alpha_{A}$ are obtained by exploiting the transformation formulae used in defining $P_{A}$ \eqref{P} and $\underline{P}_{A}$ \eqref{underlineP}, with higher order statements obtained by means of commutation.  Combining uniform boundedness and integrated decay estimates, we obtain degenerate exponential and non-degenerate super-polynomial decay of $\alpha_{A}$ and $\underline\alpha_{A}$ in much the same way as in Theorems \ref{degExpDecay} and \ref{PolyDecay}.

Our estimates split naturally into three radial regions relating to the critical radius $r_{\mu}$, where $\mu$ is minimized; concretely,
\begin{equation}\label{criticalRadius}
r_{\mu} := \left(\frac{3M}{\Lambda}\right)^{1/3}.
\end{equation}

We let 
\begin{align}
\begin{split}
r_3 &:= \min\{3M, r_{\mu}/2\},\\
R_3 &:= \max\{3M, 3r_{\mu}/2\},
\end{split}
\end{align}
and use the shorthand
\begin{align}
\begin{split}
\RN{1} &:= \{r_{b} \leq r \leq r_3\},\\
\RN{2}&:= \{r_3 \leq r \leq R_3\},\\
\RN{3}&:= \{R_3 \leq r \leq r_{c}\},
\end{split}
\end{align}
noting that $\mu$ is strictly decreasing on $\RN{1}$ and strictly increasing on $\RN{3}$.

We present the analysis for $\underline{\alpha}_{A}$, that for $\alpha_{A}$ being analogous.  As $\underline\alpha$ fails to be regular at the event horizon, we introduce the normalized quantity 
\begin{equation}
\tilde{\underline\alpha}_{A} := (1-\mu)^{-1}\underline\alpha_{A}.
\end{equation}
Throughout, we will make use of the $N$-energy for $\underline\alpha$, specified by
\begin{equation}
E^{N}_{\underline\alpha}(\tau) := \int_{\underline{C}_{\tau,\tau}} J^{N}_{a}[\tilde{\underline\alpha}]\eta^{a} + \int_{\overline{C}_{\tau,\tau}} J^{N}_{a}[\underline\alpha]\eta^{a}.
\end{equation}
We remind the reader of the volume form conventions \eqref{volumeForms}.

\subsubsection{Region $\RN{1}$}

Applying \eqref{underlineP} and H\"{o}lder's inequality, we obtain the differential inequality
\begin{align}\label{differentialOne}
\begin{split}
\slashed\nabla_{L}\left(r^2(1-\mu)|\tilde{\underline\alpha}|^2\right) &= r^2(1-\mu)\mu_{r}|\tilde{\underline\alpha}|^2 + 2(1-\mu)\tilde{\underline\alpha}\cdot\underline{P}\\
&\leq -c(M,\Lambda)(1-\mu)|\tilde{\underline\alpha}|^2 + C(M,\Lambda)(1-\mu)|\underline{P}|^2.
\end{split}
\end{align}
Integrating \eqref{differentialOne} over the spacetime region $\mathcal{R}(\tau',\tau) \cap \RN{1}$ yields the integrated decay estimate
\begin{align}\label{intSmall}
\begin{split}
\int_{\mathcal{R}(\tau',\tau) \cap {\RN{1}}} |\tilde{\underline{\alpha}}|^2
&\leq C(M,\Lambda)\left[\int_{\mathcal{R}(\tau',\tau) \cap {\RN{1}}} |\underline{P}|^2 + \int_{\underline{C}_{\tau',\tau'}\cap {\RN{1}}} |\tilde{\underline\alpha}|^2\right]\\
&\leq C(M,\Lambda)\left[E^{N}_{\underline{P}}(\tau') + E^{N}_{\Omega\underline{P}}(\tau') + E^{N}_{\underline\alpha}(\tau')\right],
\end{split}
\end{align}
and the energy estimate
\begin{align}\label{energySmall}
\begin{split}
\int_{\underline{C}_{\tau,\tau}\cap {\RN{1}}} |\tilde{\underline{\alpha}}|^2
&\leq C(M,\Lambda)\left[\int_{\mathcal{R}(\tau',\tau) \cap{\RN{1}}} |\underline{P}|^2 + \int_{\underline{C}_{\tau',\tau'}\cap {\RN{1}}} |\tilde{\underline\alpha}|^2\right]\\
&\leq C(M,\Lambda)\left[E^{N}_{\underline{P}}(\tau') + E^{N}_{\Omega\underline{P}}(\tau') + E^{N}_{\underline\alpha}(\tau')\right],
\end{split}
\end{align}
where we have applied Theorem \ref{FIdecay} to $\underline{P}$.

Additionally, given $u$ satisfying $\tau'-(r_3)_* \leq u \leq \tau - (r_3)_*$, integration of \eqref{differentialOne} over the region $\overline{C}_{u,\tau'}\cap{\RN{1}}$ implies
\begin{align}\label{transitionOne}
\begin{split}
&\int_{S^2(u,v(u,r_3))}|\underline\alpha|^2 \\
&\leq C(M,\Lambda)\left[\int_{\overline{C}_{u,\tau'}\cap{\RN{1}}}|\underline{P}|^2 + \int_{S^2(u,\tau')}(1-\mu)|\tilde{\underline\alpha}|^2\right].
\end{split}
\end{align}

\subsubsection{Region $\RN{2}$}
Applying a suitable radial weight and H\"{o}lder's inequality, we obtain the differential inequality
\begin{align}\label{differentialTwo}
\begin{split}
\slashed{\nabla}_{L}\left(r|\underline\alpha|^2\right) &= -(1-\mu)|\underline\alpha|^2 + r(1-\mu)\underline\alpha\cdot\underline{P}\\
&\leq -c(M,\Lambda)(1-\mu)|\underline\alpha|^2 + C(M,\Lambda)(1-\mu)|\underline{P}|^2.
\end{split}
\end{align}

For a choice $(u,v)$ in $\RN{2}$ with $\tau' \leq u \leq \tau' - (r_3)_*$, integration of \eqref{differentialTwo} yields
\begin{align}\label{fundamentalEstimateTwo}
\begin{split}
&\int_{S^2(u,v)} |\underline\alpha|^2 + \int_{\overline{C}_{u,\tau'}\setminus{\overline{C}_{u,v}}} |\underline\alpha|^2\\
&\leq C(M,\Lambda)\left[\int_{S^2(u,\tau')} |\underline\alpha|^2 + \int_{\overline{C}_{u,\tau'}\setminus{\overline{C}_{u,v}}}|\underline{P}|^2\right].
\end{split}
\end{align}
Alternatively, for $\tau' - (r_3)_* \leq u \leq \tau - (r_3)_*$ we find
\begin{align}\label{fundamentalEstimateTwoTwo}
\begin{split}
&\int_{S^2(u,v)} |\underline\alpha|^2 + \int_{\left(\overline{C}_{u,\tau'}\setminus{\overline{C}_{u,v}}\right)\cap{\RN{2}}} |\underline\alpha|^2\\
&\leq C(M,\Lambda)\left[\int_{S^2(u,v(u,r_3))} |\underline\alpha|^2 + \int_{\left(\overline{C}_{u,\tau'}\setminus{\overline{C}_{u,v}}\right)\cap{\RN{2}}}|\underline{P}|^2\right]\\
&\leq C(M,\Lambda)\left[\int_{S^2(u,\tau')} (1-\mu)|\tilde{\underline\alpha}|^2 + \int_{\overline{C}_{u,\tau'}\setminus{\overline{C}_{u,v}}}|\underline{P}|^2\right],
\end{split}
\end{align}
where we have used \eqref{transitionOne}.

With the choices $v = v(u,R_3)$ for $\tau' \leq u \leq \tau$ and $v = \tau$ otherwise, integration of \eqref{fundamentalEstimateTwo} and \eqref{fundamentalEstimateTwoTwo} in $u$ yields the integrated decay estimate
\begin{equation}\label{intMedium}
\int_{\mathcal{R}(\tau',\tau) \cap {\RN{2}}} |\underline\alpha|^2 \leq C(M,\Lambda)\left[E^{N}_{\underline{P}}(\tau') + E^{N}_{\Omega\underline{P}}(\tau') + E^{N}_{\underline\alpha}(\tau')\right].
\end{equation}

Likewise, taking $v = \tau$ and integrating in $u$, we obtain
\begin{equation}\label{energyMedium}
\int_{\underline{C}_{\tau,\tau}\cap{\RN{2}}} |\underline\alpha|^2 \leq C(M,\Lambda)\left[E^{N}_{\underline{P}}(\tau') + E^{N}_{\Omega\underline{P}}(\tau') + E^{N}_{\underline\alpha}(\tau')\right].
\end{equation}

Finally, with the choices $u = \tau$ and $v = v(\tau,R_3)$, we have
\begin{align}\label{lossDerivOne}
\begin{split}
&\int_{\overline{C}_{\tau,\tau}\cap{\RN{2}}} |\underline\alpha|^2 \leq \int_{\overline{C}_{\tau,\tau'}\cap{\RN{2}}} |\underline\alpha|^2 \\
&\leq C(M,\Lambda)\left[\int_{S^2(\tau,\tau')} (1-\mu)|\tilde{\underline\alpha}|^2 + \int_{\overline{C}_{\tau,\tau'}\cap{\left(\RN{1}\cup\RN{2}\right)}}|\underline{P}|^2\right]\\
&\leq C(M,\Lambda)\left[E^{N}_{\underline{P}}(\tau') + E^{N}_{\underline\alpha}(\tau')\right],
\end{split}
\end{align}
where we have used the one-dimensional Sobolev inequality and \eqref{Testimates}.

\subsubsection{Region $\RN{3}$}

Integrating the differential inequality
\begin{align}
\begin{split}
\slashed\nabla_{L}\left((1-\mu)r^2|\underline\alpha|^2\right) &= -r^2(1-\mu)\mu_{r}|\underline\alpha|^2 + 2(1-\mu)^2\underline\alpha\cdot\underline{P}\\
&\leq -c(M,\Lambda)(1-\mu)|\underline\alpha|^2 + C(M,\Lambda)(1-\mu)|\underline{P}|^2
\end{split}
\end{align}
with respect to a pair $(u,v)$ in $\RN{3}$, we find
\begin{align}\label{fundamentalEstimateThree}
\begin{split}
&\int_{S^2(u,v)} |\underline\alpha|^2 + \int_{\left(\underline{C}_{u,\tau'}\setminus{\underline{C}_{u,v}}\right)\cap{\RN{3}}} |\underline\alpha|^2\\
&\leq C(M,\Lambda)\left[\int_{S^2(u,v(u,R_3))} |\underline\alpha|^2 + \int_{\left(\underline{C}_{u,\tau'}\setminus{\underline{C}_{u,v}}\right)\cap{\RN{3}}}|\underline{P}|^2\right]\\
&\leq C(M,\Lambda)\left[\int_{S^2(u,\tau')}(1-\mu)|\tilde{\underline\alpha}|^2 + \int_{\underline{C}_{u,\tau'}\setminus{\underline{C}_{u,v}}}|\underline{P}|^2\right],
\end{split}
\end{align}
where we have applied \eqref{fundamentalEstimateTwo} and \eqref{fundamentalEstimateTwoTwo}.  Indeed, the estimate above has the same form as these, and similar arguments yield the integrated decay estimate
\begin{equation}\label{intLarge}
\int_{\mathcal{R}_{\tau',\tau}\cap{\RN{3}}} |\underline\alpha|^2 \leq C(M,\Lambda)\left[E^{N}_{\underline{P}}(\tau') + E^{N}_{\Omega\underline{P}}(\tau') + E^{N}_{\underline\alpha}(\tau')\right],
\end{equation}
and the energy estimate
\begin{align}\label{lossDerivTwo}
\begin{split}
&\int_{\overline{C}_{\tau,\tau}\cap{\RN{3}}} |\underline\alpha|^2 \leq \int_{\overline{C}_{\tau,\tau'}\cap{\RN{3}}} |\underline\alpha|^2 \\
&\leq C(M,\Lambda)\left[\int_{S^2(\tau,\tau')} (1-\mu)|\tilde{\underline\alpha}|^2 + \int_{\overline{C}_{\tau,\tau'}}|\underline{P}|^2\right]\\
&\leq C(M,\Lambda)\left[E^{N}_{\underline{P}}(\tau') + E^{N}_{\underline\alpha}(\tau')\right],
\end{split}
\end{align}
analogous to \eqref{intMedium} and \eqref{lossDerivOne}.

\subsubsection{Derivative Estimates}

Combining the estimates of the previous three subsections, we have deduced the integrated decay estimate
\begin{align}
\begin{split}
&\int_{\mathcal{R}(\tau',\tau) \cap{\{ r\geq 3M\}}} |\underline\alpha|^2 + \int_{\mathcal{R}(\tau',\tau) \cap{\{ r\leq 3M\}}} |\tilde{\underline\alpha}|^2\\
&\leq C(M,\Lambda)\left[E^{N}_{\underline{P}}(\tau') + E^{N}_{\Omega\underline{P}}(\tau') + E^{N}_{\underline\alpha}(\tau')\right],
\end{split}
\end{align}
and the energy estimate
\begin{align}
\begin{split}
&\int_{\overline{C}_{\tau,\tau}} | \underline\alpha|^2 + \int_{\underline{C}_{\tau,\tau}} |\tilde{\underline\alpha}|^2 \\
&\leq C(M,\Lambda)\left[E^{N}_{\underline{P}}(\tau') + E^{N}_{\Omega\underline{P}}(\tau') + E^{N}_{\underline\alpha}(\tau') \right].
\end{split}
\end{align}
It remains to obtain analogous estimates for higher derivatives of $\underline\alpha$.

Using \eqref{underlineP}, we calculate
\begin{equation}
\slashed{\nabla}_{L} \underline\alpha_{A} = -r^{-1}(1-\mu)\underline\alpha_{A} + r^{-2}(1-\mu)\underline{P}_{A}.
\end{equation}
Written with respect to the regular pair \eqref{cosmoHorizonPair}, the above identity leads to
\begin{equation}
|\slashed{\nabla}_{\bar{e}_3} \underline\alpha|^2 \leq C(M,\Lambda)\left[|\underline\alpha|^2 + |\underline{P}|^2\right].
\end{equation}
Integrating and using the results on $\underline\alpha$ from the previous subsections, we deduce the integrated decay estimate
\begin{align}
\begin{split}
\int_{\mathcal{R}(\tau',\tau) \cap{\{ r\geq 3M\}}} |\slashed{\nabla}_{\bar{e}_3} \underline\alpha|^2 \leq &C(M,\Lambda)\int_{\mathcal{R}(\tau',\tau)\cap{\{r \geq 3M\}}} \left[|\underline\alpha|^2 + |\underline{P}|^2\right]\\
&\leq C(M,\Lambda)\left[E^{N}_{\underline{P}}(\tau') + E^{N}_{\Omega\underline{P}}(\tau') + E^{N}_{\underline\alpha}(\tau')\right],
\end{split}
\end{align}
and the energy estimate
\begin{align}
\begin{split}
\int_{\overline{C}_{\tau,\tau}} |\slashed{\nabla}_{\bar{e}_3} \underline\alpha|^2 \leq &C(M,\Lambda)\int_{\overline{C}_{\tau,\tau}} \left[|\underline\alpha|^2 + |\underline{P}|^2\right]\\
&\leq C(M,\Lambda)\left[E^{N}_{\underline{P}}(\tau') + E^{N}_{\Omega\underline{P}}(\tau') + E^{N}_{\underline\alpha}(\tau')\right].
\end{split}
\end{align}

Next we estimate the $\underline{L}$ derivatives of $\underline\alpha$.  Again, owing to issues of regularity, we use the pair \eqref{eventHorizonPair} near the event horizon.  Estimates build upon those of $\underline\alpha$ from the previous subsection, noting the commutation relation $[\slashed{\nabla}_{L},\slashed{\nabla}_{\underline{L}}] = 0$.

In region $\RN{1}$, the differential inequality
\begin{align}
\begin{split}
&\slashed\nabla_{L}\left(r^2(1-\mu)|\slashed\nabla_{e_{4}}\tilde{\underline\alpha}|^2\right) = \slashed{\nabla}_{L}\left(r^2(1-\mu)^{-1}|\slashed{\nabla}_{\underline{L}}\tilde{\underline\alpha}|^2\right)\\
&= 3r^2\mu_{r}(1-\mu)^{-1}|\slashed\nabla_{\underline{L}}\tilde{\underline\alpha}|^2 + 4r^{-1}\slashed\nabla_{\underline{L}}\tilde{\underline\alpha}\cdot\underline{P} + 2(1-\mu)^{-1}\slashed\nabla_{\underline{L}}\tilde{\underline\alpha}\cdot\slashed{\nabla}_{\underline{L}}\underline{P}\\
&-2\left(r^2\mu_{rr}+r\mu_{r}+(1-\mu)\right)\slashed{\nabla}_{\underline{L}}\tilde{\underline\alpha}\cdot\tilde{\underline\alpha}\\
&\leq -c(M,\Lambda)(1-\mu)^{-1}|\slashed\nabla_{\underline{L}}\tilde{\underline\alpha}|^2\\
& + C(M,\Lambda)\left[(1-\mu)|\tilde{\underline\alpha}|^2 + (1-\mu)|\underline{P}|^2 + (1-\mu)^{-1}|\slashed\nabla_{\underline{L}}\underline{P}|^2\right]\\
&= -c(M,\Lambda)(1-\mu)|\slashed\nabla_{e_{4}}\tilde{\underline\alpha}|^2\\
& + C(M,\Lambda)\left[(1-\mu)|\tilde{\underline\alpha}|^2 + (1-\mu)|\underline{P}|^2 + (1-\mu)|\slashed\nabla_{e_{4}}\underline{P}|^2\right]
\end{split}
\end{align}
integrates to give the analogs of \eqref{intSmall} and \eqref{energySmall}.  Likewise, a differential inequality analogous to \eqref{differentialTwo} gives analogs of \eqref{intMedium} and \eqref{energyMedium} on $\RN{2}$.  Summarizing, we have the integrated decay estimate
\begin{equation}
\int_{\mathcal{R}(\tau',\tau) \cap{\{ r\leq 3M\}}} |\slashed{\nabla}_{e_4} \tilde{\underline\alpha}|^2 \leq C(M,\Lambda)\left[E^{N}_{\underline{P}}(\tau') + E^{N}_{\Omega\underline{P}}(\tau') + E^{N}_{\underline\alpha}(\tau')\right],
\end{equation}
and the energy estimate
\begin{equation}
\int_{\underline{C}_{\tau,\tau}} |\slashed{\nabla}_{e_4} \tilde{\underline\alpha}|^2 \leq C(M,\Lambda)\left[E^{N}_{\underline{P}}(\tau') + E^{N}_{\Omega\underline{P}}(\tau') + E^{N}_{\underline\alpha}(\tau')\right].
\end{equation}

Finally, we estimate the angular gradient of $\underline\alpha$.  Again, the estimates build upon those of $\underline\alpha$, noting the commutation relation $[r\slashed\nabla_{A},\slashed{\nabla}_{L}] = 0$.

For example, in region $\RN{1}$ the differential inequality
\begin{align}
\begin{split}
\slashed{\nabla}_{L}\left(r^4|\slashed{\nabla}\tilde{\underline{\alpha}}|^2\right) &= 2r^4\mu_{r}|\slashed{\nabla}\tilde{\underline{\alpha}}|^2 + 2r^2\slashed{\nabla}\tilde{\underline{\alpha}}\cdot\slashed{\nabla}\underline{P}\\
&\leq -c(M,\Lambda)|\slashed{\nabla}\tilde{\underline{\alpha}}|^2 + C(M,\Lambda)|\slashed{\nabla}\underline{P}|^2
\end{split}
\end{align}
leads to the analog of \eqref{intSmall} and \eqref{energySmall}.  Significantly, the estimates in regions $\RN{2}$ and $\RN{3}$ display a loss of derivative, in applying the one-dimensional Sobolev inequality (see \eqref{lossDerivOne} and \eqref{lossDerivTwo}).  Overall, we find the integrated decay estimate
\begin{align}
\begin{split}
&\int_{\mathcal{R}(\tau',\tau) \cap{\{ r\geq 3M\}}} |\slashed{\nabla}\underline\alpha|^2 + \int_{\mathcal{R}(\tau',\tau) \cap{\{ r\leq 3M\}}} |\slashed{\nabla}\tilde{\underline\alpha}|^2\\
&\leq C(M,\Lambda)\left[E^{N}_{\underline{P}}(\tau') + E^{N}_{\Omega\underline{P}}(\tau') + E^{N}_{\underline\alpha}(\tau') + E^{N}_{\Omega\underline\alpha}(\tau')\right],
\end{split}
\end{align}
and the energy estimate
\begin{align}
\begin{split}
&\int_{\overline{C}_{\tau,\tau}} |\slashed{\nabla} \underline\alpha|^2 + \int_{\underline{C}_{\tau,\tau}} |\slashed{\nabla} \tilde{\underline\alpha}|^2 \\
&\leq C(M,\Lambda)\left[E^{N}_{\underline{P}}(\tau') + E^{N}_{\Omega\underline{P}}(\tau') + E^{N}_{\underline\alpha}(\tau') + E^{N}_{\Omega\underline\alpha}(\tau')\right].
\end{split}
\end{align}

\subsubsection{Proof of Decay}

Combining the results of the previous subsections, we have the integrated decay estimate
\begin{equation}
\int_{\tau'}^{\tau} E^{N}_{\underline\alpha}(s)ds \leq C(M,\Lambda)\left[E^{N}_{\underline{P}}(\tau') + E^{N}_{\Omega\underline{P}}(\tau') + E^{N}_{\underline\alpha}(\tau') + E^{N}_{\Omega\underline\alpha}(\tau')\right],
\end{equation}
and the energy estimate
\begin{equation}
E^{N}_{\underline\alpha}(\tau) \leq C(M,\Lambda)\left[E^{N}_{\underline{P}}(\tau') + E^{N}_{\Omega\underline{P}}(\tau') + E^{N}_{\underline\alpha}(\tau') + E^{N}_{\Omega\underline\alpha}(\tau')\right].
\end{equation}

Degenerate exponential decay, analogous to Theorem \ref{degExpDecay}, proceeds in the following manner.  Taking a spherical harmonic decomposition and letting 
\begin{align*}
f(\tau) &= E^{N}_{\underline{P}_{\ell}}(\tau),\\
g(\tau) &=E^{N}_{\underline{\alpha}_{\ell}}(\tau),\\
h(\tau) &= f(\tau) + g(\tau),
\end{align*}
the above estimates on $g$, along with the estimates on $f$ from Theorems \ref{FIboundedness} and \ref{FIdecay}, imply
\begin{align*}
h(\tau) \leq C_1(M,\Lambda)(1+\ell^2)h(\tau'),\\
\int_{\tau'}^{\tau}h(s)ds \leq C_2(M,\Lambda)(1+\ell^2)h(\tau').
\end{align*}
Exponential decay of $h$, hence of $g$, follows from the same argument in Theorem \ref{degExpDecay}.  Note that the exponential decay parameter now degenerates as $\ell^4$, rather than $\ell^2$.  As a consequence, the non-degenerate super-polynomial decay estimate
\begin{equation}
E^{N}_{\underline\alpha}(\tau) \leq C(M,\Lambda,m)\left(\sum_{(q) \leq 2m} \left(E^{N}_{\Omega^{(q)}\underline\alpha}(0) + E^{N}_{\Omega^{(q)}\underline{P}}(0)\right)\right)(1+\tau)^{-m}
\end{equation}
requires twice as much regularity on the initial data.

The $SO(3)$-invariance of the underlying equations leads to higher order versions of the energy estimate above; together with the Sobolev embedding theorems, these higher order estimates lead to pointwise control of $\underline\alpha$.

Analogous results hold for $\alpha$.  Introducing the normalized quantity
\begin{equation}
\tilde{\alpha}_{A} := (1-\mu)^{-1}\alpha_{A},
\end{equation}
regular at the cosmological horizon, and the $N$-energy for $\alpha$
\begin{equation}
E^{N}_{\alpha}(\tau) := \int_{\underline{C}_{\tau,\tau}} J^{N}_{a}[\alpha]\eta^{a} + \int_{\overline{C}_{\tau,\tau}} J^{N}_{a}[\tilde{\alpha}]\eta^{a},
\end{equation}
we have the energy decay
\begin{equation}
E^{N}_{\alpha}(\tau) \leq C(M,\Lambda,m)\left(\sum_{(q) \leq 2m} \left(E^{N}_{\Omega^{(q)}\alpha}(0) + E^{N}_{\Omega^{(q)}P}(0)\right)\right)(1+\tau)^{-m},
\end{equation}
again allowing for a pointwise estimate on $\alpha$ through commutation and application of Sobolev embedding.

\subsection{Decay of $\rho$ and $\sigma$}

The definitions of $P_{A}$ \eqref{P} and $\underline{P}_{A}$ \eqref{underlineP} and the Maxwell equations \eqref{MaxwellOne} and \eqref{MaxwellTwo} yield the relations
\begin{align}\label{Prhosigma}
\begin{split}
&P_{A} = r^2\left(\slashed{\nabla}_{A}\rho + \slashed{\epsilon}_{AB}\slashed{\nabla}^{B}\sigma\right),\\
&\underline{P}_{A} = r^2\left(-\slashed{\nabla}_{A}\rho + \slashed{\epsilon}_{AB}\slashed{\nabla}^{B}\sigma\right).
\end{split}
\end{align}

We present the decay estimates for $\rho$, those for $\sigma$ being analogous.  Considered on the null hypersurface $\underline{C}_{\tau,\tau}$ \eqref{nullHypersurfaces}, the Poincar\'{e} inequality on spheres of symmetry yields
\[\sup_{\underline{C}_{\tau,\tau}}|(\rho - \bar{\rho})(u,\tau,\theta,\phi)|^2 \leq \int_{S^2(u,\tau)} \left[|\slashed{\nabla}\rho|^2 + |\slashed\nabla^2\rho|^2\right].\]
Application of the one-dimensional Sobolev inequality on the hypersurface $\underline{C}_{\tau,\tau}$, along with the relations \eqref{Prhosigma}, yields the estimate
\begin{align*}
\sup_{\underline{C}_{\tau,\tau}}|\rho - \bar{\rho}|^2 &\leq C(M,\Lambda) \left(E^{N}_{P}(\tau) + E^{N}_{\underline{P}}(\tau)+E^{N}_{\Omega P}(\tau) + E^{N}_{\Omega\underline{P}}(\tau)\right)\\
&\leq C(M,\Lambda,m) \left(\sum_{(q) \leq m+1} \left(E^{N}_{\Omega^{(q)}P}(0) + E^{N}_{\Omega^{(q)}\underline{P}}(0)\right)\right)(1+\tau)^{-m},
\end{align*}
where we have appealed to Theorem \ref{PolyDecay}.

A similar result is available on the null hypersurface $\overline{C}_{\tau,\tau}$ \eqref{nullHypersurfaces}.  Taken together, the two yield the super-polynomial decay estimate
\begin{equation}
\sup_{\Sigma_{\tau}} |\rho - \bar{\rho}| \leq C(M,\Lambda,m) \sqrt{\sum_{(q) \leq m+1} \left(E^{N}_{\Omega^{(q)}P}(0) + E^{N}_{\Omega^{(q)}\underline{P}}(0)\right)}(1+\tau)^{-m/2}.
\end{equation}

Likewise, the normalized null component $\sigma$ satisfies
\begin{equation}
\sup_{\Sigma_{\tau}} |\sigma - \bar{\sigma}| \leq C(M,\Lambda,m) \sqrt{\sum_{(q) \leq m+1} \left(E^{N}_{\Omega^{(q)}P}(0) + E^{N}_{\Omega^{(q)}\underline{P}}(0)\right)}(1+\tau)^{-m/2}.
\end{equation}

\subsection{Summary of Results}

Collecting the decay estimates on the Maxwell components and on the higher order quantities $P$ and $\underline{P}$, we summarize our results in the following theorem:

\begin{theorem}
Suppose $F$ is a solution of the Maxwell equations \eqref{MaxwellEqns} on the Schwarzschild-de Sitter spacetime with mass $M$ and cosmological constant $\Lambda$, satisfying the sub-extremal condition \eqref{subextremal}.  Further, suppose that $F$ is specified by smooth initial data on the hypersurface $\Sigma_0$ \eqref{decayFoliation}.  Then the derived quantities $P$ \eqref{P} and $\underline{P}$ \eqref{underlineP} satisfy the Fackerell-Ipser equation \eqref{FIeqn} and the super-polynomial decay estimates of Theorem \ref{PolyDecay}.

In addition, the Maxwell components \eqref{nullMaxwellTensor} satisfy the super-polynomial decay estimates
\begin{align}
\begin{split}
E^{N}_{\alpha}(\tau) &\leq C(M,\Lambda,m)\left(\sum_{(q) \leq 2m} \left(E^{N}_{\Omega^{(q)}\alpha}(0) + E^{N}_{\Omega^{(q)}P}(0)\right)\right)(1+\tau)^{-m},\\
E^{N}_{\underline\alpha}(\tau) &\leq C(M,\Lambda,m)\left(\sum_{(q) \leq 2m} \left(E^{N}_{\Omega^{(q)}\underline\alpha}(0) + E^{N}_{\Omega^{(q)}\underline{P}}(0)\right)\right)(1+\tau)^{-m},
\end{split}
\end{align}
and
\begin{align}
\begin{split}
&\sup_{\Sigma_{\tau}} |\sigma - \bar{\sigma}|\\
&\leq C(M,\Lambda,m) \sqrt{\sum_{(q) \leq m+1} \left(E^{N}_{\Omega^{(q)}P}(0) + E^{N}_{\Omega^{(q)}\underline{P}}(0)\right)}(1+\tau)^{-m/2},\\
&\sup_{\Sigma_{\tau}} |\rho - \bar{\rho}|\\
&\leq C(M,\Lambda,m) \sqrt{\sum_{(q) \leq m+1} \left(E^{N}_{\Omega^{(q)}P}(0) + E^{N}_{\Omega^{(q)}\underline{P}}(0)\right)}(1+\tau)^{-m/2}.
\end{split}
\end{align}

\end{theorem}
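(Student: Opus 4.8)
The plan is to assemble the statement from the results already established, so the task is one of bookkeeping rather than fresh analysis. For the derived quantities, the transformation-theory lemma shows directly that $P_A$ and $\underline{P}_A$, defined by \eqref{P} and \eqref{underlineP}, satisfy the Fackerell-Ipser equation \eqref{FIeqn}; since both are regular across $\mathcal{H}^+$ and $\overline{\mathcal{H}}^+$, they constitute admissible data for the analysis of the Fackerell-Ipser equation, and Theorems \ref{FIboundedness}, \ref{FIdecay}, \ref{PolyDecay} apply verbatim to yield their super-polynomial energy decay together with the attendant angular commutations.

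For $\alpha_A$ and $\underline{\alpha}_A$ I would invoke the transport estimates above. The defining relations \eqref{P}, \eqref{underlineP} realize $\slashed{\nabla}_{\underline{L}}(r\alpha_A)$ and $\slashed{\nabla}_{L}(r\underline{\alpha}_A)$ algebraically in terms of $P_A$ and $\underline{P}_A$; integrating the resulting differential inequalities along the outgoing (respectively ingoing) null direction in each of the three radial regions $\RN{1}$, $\RN{2}$, $\RN{3}$ --- using the monotonicity of $\mu$ on $\RN{1}$ and $\RN{3}$ and a one-dimensional Sobolev inequality on $\RN{2}$ to transition between regions --- produces boundedness and integrated-decay estimates for $E^{N}_{\underline{\alpha}}$ controlled by the initial $N$-energies of $\underline{\alpha}$, $\underline{P}$ and one angular commutation of each. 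Feeding these into the exponential-decay argument of Theorem \ref{degExpDecay}, now with the decay rate degenerating like $\ell^{-4}$ rather than $\ell^{-2}$ since each transport estimate costs an angular commutator, and applying the comparison $\exp(\tau) \geq \tau^{m}/m!$, yields the stated super-polynomial decay, the doubling $(q) \leq 2m$ reflecting the accumulated loss of angular regularity; commuting with the $\Omega_i$ and applying Sobolev embedding then upgrades this to pointwise control. The argument for $\alpha_A$ is identical after interchanging the roles of the two null directions and of the two horizons.

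For $\rho$ and $\sigma$ I would use the relations \eqref{Prhosigma}, obtained from \eqref{P}, \eqref{underlineP} together with \eqref{MaxwellOne} and \eqref{MaxwellTwo}, which express $r^2\slashed{\nabla}_A\rho$ and $r^2\slashed{\epsilon}_{AB}\slashed{\nabla}^B\sigma$ as linear combinations of $P_A$ and $\underline{P}_A$. On each sphere of symmetry the Poincar\'{e} inequality \eqref{PoincarePsi2}, applied to $\rho - \bar{\rho}$ whose spherical mean vanishes, controls $\rho - \bar{\rho}$ by its angular gradient, hence by $P$, $\underline{P}$ and one further angular derivative; a one-dimensional Sobolev inequality along $\underline{C}_{\tau,\tau}$ and $\overline{C}_{\tau,\tau}$ converts the resulting null-hypersurface energies into a pointwise bound on $\Sigma_{\tau}$. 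Inserting the decay of $E^{N}_{P}$ and $E^{N}_{\underline{P}}$ from Theorem \ref{PolyDecay} gives the claimed $(1+\tau)^{-m/2}$ rate, and likewise for $\sigma$.

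The genuinely delicate point, though it was already dispatched in the analysis of the Fackerell-Ipser equation and in the transport estimates for $\alpha$, $\underline{\alpha}$, is the loss of derivatives at the photon sphere $r = 3M$: trapping forces the integrated decay estimate of Theorem \ref{FIdecay} to carry an angular commutator, and since each pass through the transport scheme for $\alpha$, $\underline{\alpha}$ and each use of Poincar\'{e} for $\rho$, $\sigma$ introduces a further one, the only substantive content of the summary proof is tracking how many commutations with $\Omega$ ultimately appear on the right-hand side of each estimate.
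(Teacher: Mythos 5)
Your proposal follows the paper's own route: the transformation lemma plus Theorems \ref{FIboundedness}--\ref{PolyDecay} handle $P$, $\underline{P}$; the three-region transport estimates based on \eqref{P}, \eqref{underlineP} (with the one-dimensional Sobolev loss and the resulting $\ell^4$ degeneration forcing the $(q)\leq 2m$ commutations) give the $\alpha$, $\underline{\alpha}$ decay; and the relations \eqref{Prhosigma} with Poincar\'{e} on the spheres and Sobolev along the null cones give the $(1+\tau)^{-m/2}$ bounds for $\rho-\bar\rho$ and $\sigma-\bar\sigma$. This is essentially the same assembly the paper performs, with the derivative-loss bookkeeping correctly tracked, so the proposal is correct.
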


\bibliographystyle{plain}
\bibliography{maxwellSchDs}

\begin{thebibliography}{10}

\bibitem{Aksteiner}
S.~Aksteiner and T.~B{\"{a}}ckdahl.
\newblock Symmetries of linearized gravity from adjoint operators.
\newblock {\em arXiv preprint}, 2016.
\newblock arXiv.1609.04584.

\bibitem{Andersson}
L.~Andersson and P.~Blue.
\newblock Uniform energy bound and asymptotics for the {M}axwell field on a
  slowly rotating {K}err black hole exterior.
\newblock {\em J. Hyperbolic Differ. Equ.}, 12(4):689--743, 2015.

\bibitem{Araneda}
B.~Araneda.
\newblock {Symmetry operators and decoupled equations for linear fields on
  black hole spacetimes}.
\newblock {\em Class. Quant. Grav.}, 34(3):035002, 2017.

\bibitem{Carter}
B.~Carter.
\newblock Black hole equilibrium states.
\newblock In {\em Black holes/{L}es astres occlus (\'Ecole d'\'Et\'e {P}hys.
  {T}h\'eor., {L}es {H}ouches, 1972)}, pages 57--214. Gordon and Breach, New
  York, 1973.

\bibitem{Chandra1}
S.~Chandrasekhar.
\newblock On a transformation of {T}eukolsky's equation and the electromagnetic
  perturbations of the {Kerr} black hole.
\newblock {\em {P}roc. {R}oy. {S}oc. {L}ondon {S}er. {A}}, 348(1652):39--55,
  1976.

\bibitem{DHR}
M.~Dafermos, G.~Holzegel, and I.~Rodnianski.
\newblock The linear stability of the {Schwarzschild} solution to gravitational
  perturbations.
\newblock {\em arXiV preprint}, 2016.
\newblock arXiv: 1601.06467.

\bibitem{DRdS}
M.~Dafermos and I.~Rodnianski.
\newblock {The wave equation on Schwarzschild-de Sitter spacetimes}.
\newblock {\em arXiv preprint}, 2007.
\newblock arXiv.0709.2766.

\bibitem{DR}
M.~Dafermos and I.~Rodnianski.
\newblock The red-shift effect and radiation decay on black hole spacetimes.
\newblock {\em Comm. Pure Appl. Math.}, 62(7):859--919, 2009.

\bibitem{DRClay}
M.~Dafermos and I.~Rodnianski.
\newblock Lectures on black holes and linear waves.
\newblock In {\em Evolution equations}, volume~17 of {\em Clay Math. Proc.},
  pages 97--205. Amer. Math. Soc., Providence, RI, 2013.

\bibitem{Dyatlov}
S.~Dyatlov.
\newblock {Spectral gaps for normally hyperbolic trapping}.
\newblock {\em arXiv preprint}, 2014.
\newblock arXiv.1403.6401.

\bibitem{Hawking}
G.~W. Gibbons and S.~W. Hawking.
\newblock {Cosmological Event Horizons, Thermodynamics, and Particle Creation}.
\newblock {\em Phys. Rev.}, D15:2738--2751, 1977.

\bibitem{Hintz}
P.~{Hintz}.
\newblock {Global well-posedness of quasilinear wave equations on
  asymptotically de Sitter spaces}.
\newblock {\em Annales de l'Institut Fourier}.
\newblock To appear.

\bibitem{HV4}
P.~Hintz and A.~Vasy.
\newblock {Asymptotics for the wave equation on differential forms on Kerr-de
  Sitter space}.
\newblock {\em arXiv preprint}, 2015.
\newblock arXiv.1502.03179.

\bibitem{HV2}
P.~{Hintz} and A.~{Vasy}.
\newblock {Semilinear wave equations on asymptotically de Sitter, Kerr-de
  Sitter and Minkowski spacetimes}.
\newblock {\em {Anal. PDE}}, 8(8):1807--1890, 2015.

\bibitem{HV3}
P.~Hintz and A.~Vasy.
\newblock {Global analysis of quasilinear wave equations on asymptotically
  Kerr-de Sitter spaces}.
\newblock {\em Int. Math. Res. Not.}, 2016(17):5355--5426, 2016.

\bibitem{HV1}
P.~Hintz and A.~Vasy.
\newblock {The global non-linear stability of the Kerr-de Sitter family of
  black holes}.
\newblock {\em arXiv preprint}, 2016.
\newblock arXiv.1606.04014.

\bibitem{HKW}
P.-K. Hung, J.~Keller, and M.-T. Wang.
\newblock {Linear Stability of Schwarzschild Spacetime: The Cauchy Problem of
  Metric Coefficients}.
\newblock {\em arXiv preprint}, 2017.
\newblock arXiv.1702.02843.

\bibitem{LakeRoeder}
K.~{Lake} and R.~C. {Roeder}.
\newblock {Effects of a nonvanishing cosmological constant on the spherically
  symmetric vacuum manifold}.
\newblock {\em Phys. Rev. D}, 15(12):3513--3519, 1977.

\bibitem{Pasqualotto}
F.~Pasqualotto.
\newblock {The spin {$\pm$}1 Teukolsky equations and the Maxwell system on
  Schwarzschild}.
\newblock {\em arXiv preprint}, 2016.
\newblock arXiv.1612.07244v2.

\bibitem{Ralston}
J.~Ralston.
\newblock Solutions of the wave equation with localized energy.
\newblock {\em Comm. Pure Appl. Math.}, 22:807--823, 1969.

\bibitem{Schlue1}
V.~Schlue.
\newblock {Global Results for Linear Waves on Expanding Kerr and Schwarzschild
  de Sitter Cosmologies}.
\newblock {\em Comm. Math. Phys.}, 334(2):977--1023, 2015.

\bibitem{Schlue2}
V.~Schlue.
\newblock {Decay of the Weyl curvature in expanding black hole cosmologies}.
\newblock {\em arXiv preprint}, 2016.
\newblock arXiv.1610.04172.

\bibitem{Teukolsky}
S.~A. Teukolsky.
\newblock {Perturbations of a rotating black hole. {I.} {Fundamental} equations
  for gravitational electromagnetic and neutrino field perturbations}.
\newblock {\em Astrophys. J.}, 185:635--647, 1973.

\bibitem{Vasy}
A.~{Vasy}.
\newblock {Microlocal analysis of asymptotically hyperbolic and Kerr-de Sitter
  spaces}.
\newblock {\em {Invent. Math.}}, 194(2):381--513, 2013.

\bibitem{Wald}
R.~M. Wald.
\newblock {Construction of Solutions of Gravitational, Electromagnetic, Or
  Other Perturbation Equations from Solutions of Decoupled Equations}.
\newblock {\em Phys. Rev. Lett.}, 41:203--206, 1978.

\bibitem{Wald2}
R.~M. Wald.
\newblock {Construction of Metric and Vector Potential Perturbations of a
  Reissner- Nordstr{\"o}m Black Hole}.
\newblock {\em Proceedings of the Royal Society of London A: Mathematical,
  Physical and Engineering Sciences}, 369(1736):67--81, 1979.

\bibitem{WunschZworski}
J.~{Wunsch} and M.~{Zworski}.
\newblock {Resolvent Estimates for Normally Hyperbolic Trapped Sets}.
\newblock {\em {Annales Henri Poincar{\'e}}}, 12:1349--1385, 2011.

\end{thebibliography}

\end{document}